

\documentclass[5p, twocolumn]{elsarticle}




\usepackage{amsthm}
\usepackage{amssymb}
\usepackage{amsmath}
\usepackage[latin1]{inputenc}
\usepackage{url}


\newtheorem{prop}{Proposition}[section]
\newtheorem*{nprop}{Proposition}
\newtheorem{defn}{Definition}[section]
\newtheorem{rmq}{Remark}[section]
\newcommand{\RR}{\mathbb{R}}
\newcommand{\PP}{\mathbb{P}}
\newcommand{\QQ}{\mathbb{Q}}
\newcommand{\hphi}{\hat \phi}
\newcommand{\hl}{\hat \lambda}
\newcommand{\hps}{\hat \psi}
\newcommand{\hyb}{\mathcal{H}yb}
\newcommand{\dhyb}{\mathrm{D}\mathcal{H}yb_{12}}
\newcommand\calM{\mathcal{M}}
\newcommand\calP{\mathcal{P}}

\newcommand\calC{\mathcal{C}}


\newcommand*{\QEDB}{\hfill\ensuremath{\square}}%
\def\*#1{\mathbf{#1}}



\begin{document}

\begin{frontmatter}



\title{Low-order continuous finite element spaces on hybrid non-conforming
    hexahedral-tetrahedral meshes}

\author{Maxence Reberol}
\author{Bruno L\'evy}
\address{INRIA Nancy Grand-Est}


\begin{abstract}
    This article deals with solving partial differential equations with the
    finite element method on hybrid non-conforming hexahedral-tetrahedral
    meshes. 
    By non-conforming, we mean that a quadrangular face of a hexahedron can be
    connected to two triangular faces of tetrahedra.
    We introduce a set of low-order continuous ($C^0$) finite element spaces
    defined on these meshes.
    They are built from standard tri-linear and quadratic Lagrange finite
    elements with an extra set of constraints at non-conforming
    hexahedra-tetrahedra junctions to recover continuity.
    We consider both the continuity of the geometry and the continuity of the
    function basis as follows: 
    the continuity of the geometry is achieved by using quadratic mappings for
    tetrahedra connected to tri-affine hexahedra and the continuity of
    interpolating functions is enforced in a similar manner by using quadratic
    Lagrange basis on tetrahedra with constraints at non-conforming junctions
    to match tri-linear hexahedra. 
    The so-defined function spaces are validated numerically on simple Poisson
    and linear elasticity problems for which an analytical solution is known. 
    We observe that using a hybrid mesh with the proposed function spaces
    results in an accuracy significantly better than when using linear
    tetrahedra and slightly worse than when solely using tri-linear hexahedra.
    As a consequence, the proposed function spaces may be a promising
    alternative for complex geometries that are out of reach of existing full
    hexahedral meshing methods.
\end{abstract}

\begin{keyword}
    finite element method \sep hex-dominant mesh \sep hybrid mesh \sep
    continuous function space \sep hexahedral-tetrahedral mesh


\end{keyword}

\end{frontmatter}
\section{Introduction and related work}

In finite element methods, it is widely known that hexahedron finite elements
achieve better execution-time than tetrahedra ones for reaching a given accuracy.
Automatic tetrahedral meshing techniques are now mature and work well on any
complex 3D model \cite{frey2010}, \cite{si2015tetgen}. 
On the contrary, hexahedral meshing is still an open and difficult problem for
which there is still no satisfactory solution \cite{staten2007}.
Difficulties in hexahedral meshing can be partially resolved by introducing
other elements such as tetrahedra, pyramids and prisms, thus generating 
hybrid meshes.
Recent progress in hexahedral-dominant meshing techniques such as
\cite{baudouin2014}, \cite{botella2015}, \cite{sokolov2015} and \cite{bernard2016}
make it possible to automatically produce hybrid meshes with a large majority
of hexahedra for arbitrary 3D models.

Recently in the context of discontinuous Galerkin methods, hybrid meshes have
been successfully used on acoustic wave equation problems \cite{chan2015} or on
Maxwell equations \cite{bergot2013} with significant speedups over tetrahedral
meshes. For standard Galerkin methods, continuous finite element spaces for
hybrid meshes have been introduced, such as in \cite{sherwin1998}. A succinct
survey exposing various approaches with emphasis on the pyramidal element is
available in the introduction of \cite{bergot2010}. All these propositions
involve special functions, such as rationals, as it is not possible to build a
polynomial function basis on the pyramid which is conforming with tetrahedra
and hexahedra polynomial function basis, as noticed in \cite{bedrosian1992}.

In the present article, we adopt a different approach in which we consider
hybrid meshes composed only of hexahedra and tetrahedra, for which finite element
behavior is very well understood. A second interesting point of this approach
is that enabling non-conforming hexahedra-tetrahedra junctions provides more
flexibility to hex-dominant meshing techniques, resulting in a higher
proportion of hexahedra, as it eliminates constraints associated with the
generation of pyramids. However, without special care, non-conformities in the
mesh result in a discontinuous geometry and a discontinuous function space. 

The idea of using non-conforming hexahedral-tetrahedral meshes is not new and
has been successfully developed in the context of the discontinuous Galerkin
Method in electromagnetic in \cite{durochat2013}, \cite{leger2014} and
\cite{fahs2015}. 
For continuous Galerkin methods, constraints to ensure continuity of the
divergence and of the rotational along non-conforming interfaces have been
briefly proposed in \cite{marais2008}. 
In a engineering approach, hexahedra-tetrahedra non-conforming junction have
been firstly discussed in \cite{dewhirst1993} which proposes various multi-point
constraints to ensure the function continuity or to minimize the error,
depending of the finite element considered in their software. An extention of this
approach \cite{owen1997} discards non-conforming hexahedral-tetrahedral
junctions in favor of pyramidal elements.
Our contribution is to give a formal approach to this problem and to deal with
the geometric discontinuity arising with non-planar hexahedra faces, which was
not considered in previous work to our knowledge.

It should also be noted that previous works on finite element over hybrid
meshes usually consider applications where hexahedral and tetrahedral elements
lie in distinct regions. Transitional elements or non-conforming junctions
arises then in localized layers or regions. This approach is especially
efficient for problems such as acoustic where complex objects are meshed with
tetrahedra and the propagation medium with hexahedra. Our approach is more
oriented toward hex-dominant meshes where tetrahedra are located randomly in
the mesh, resulting in a high number of non-conforming junctions scattered 
randomly in the domain.

In the present article, we introduce low-order continuous function spaces
defined on hybrid non-conforming hexahedral-tetrahedral meshes. The geometric
conformity is obtained by using quadratic mappings for tetrahedra to exactly
fit the hexahedra non-planar faces. Likewise, quadratic Lagrange basis and
constraints are used on tetrahedra to produce functions which are continuous
($C^0$) at interfaces with the tri-linear functions used in hexahedra.

\section{Continuous function spaces on hybrid hexahedral-tetrahedral meshes}
\label{sec:continuous}
Our first goal is to deliver a \emph{continuous} geometry for the mesh, in the sense
explicited below. We will then explain how to define a \emph{continuous} function space
on this geometry.

\paragraph*{Input} The input is a mesh composed of a set of vertices
(geometric information) and a set of elements defined by their vertices and
faces (combinatorial information). In the present article, we restrict
ourselves to meshes that satisfy the following specification:

\begin{defn}{\textit{Combinatorial hybrid hexahedral-tetrahedral mesh
        specification}} \\ 
    The input hybrid mesh $\calM$ is composed of a set $\calP$ of vertices,
    defined by their coordinates, a set of tetrahedra defined by their 4
    vertices in $\calP$, and a set of hexahedra defined by their 8 vertices in
    $\calP$ and their 6 faces (defined by 4 vertices in $\calP$). The
    connectivity is restricted to the following combinatorial cases:
    \begin{itemize}
       \item Two tetrahedra share 0, 1, 2 or 3 vertices.
       \item Two hexahedra share 0, 1, 2 or 4 vertices.  When they share 2
           vertices, this is a common edge.  When they share 4 vertices, this a
           common face.
       \item One hexahedron and one tetrahedron share 0, 1, 2 or 3 vertices.
           When they share 3 vertices, there exists another tetrahedron which
           also shares 3 vertices with the hexahedron and two or three vertices
           with the tetrahedron. So in this setup, the hexahedron face is
           connected to 2 tetrahedra faces. \QEDB
    \end{itemize}
    \label{defn:inputspec}
\end{defn}

This specification permits non-conforming connections between a hexahedron
and two tetrahedra, that we often refer as hybrid junction. But
it excludes all other types of non-conforming connections.  Examples of supported
and not supported configurations are shown in figure \ref{fig:good_bad_hjunction}.
\begin{figure}
  \centering
  \includegraphics[width=\columnwidth]{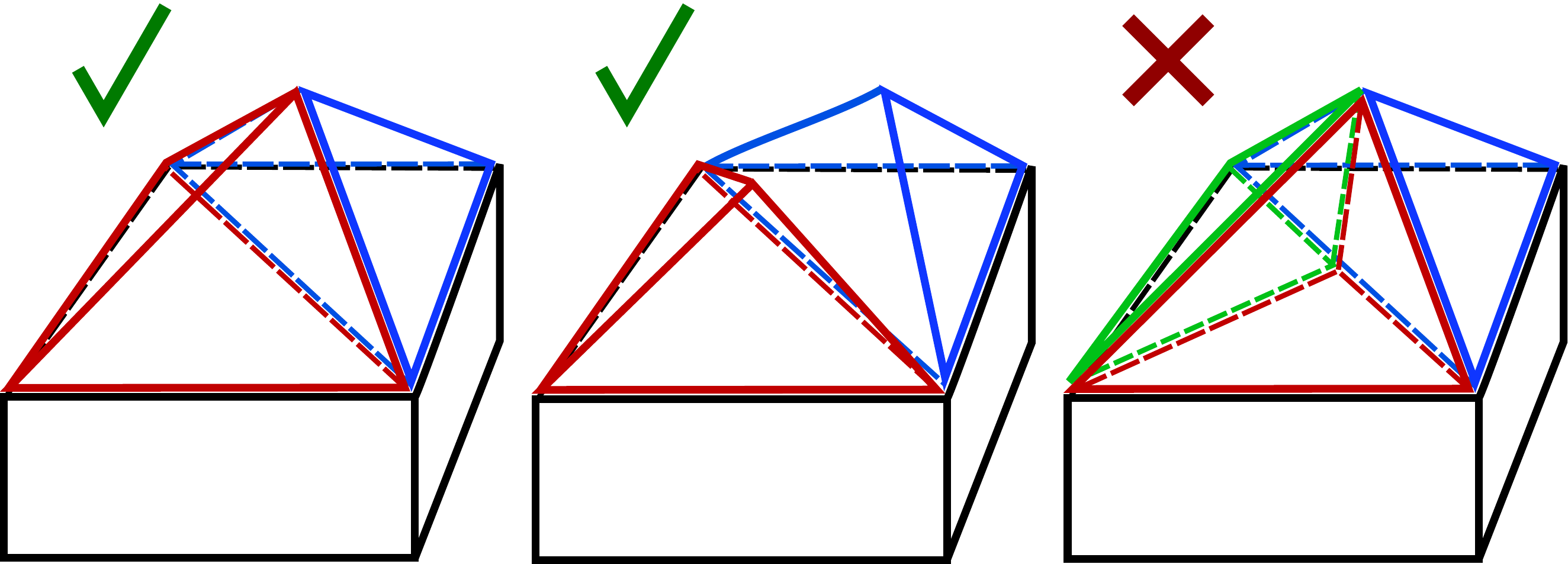} 
  \caption{Supported and not supported non-conforming junctions between
  hexahedra and tetrahedra}
  \label{fig:good_bad_hjunction}
\end{figure}

\subsection{Mesh geometry}
The input mesh, defined by its combinatorial information, does not provide a
geometry. A naive idea would be to use affine tetrahedra and tri-affine
hexahedra but this solution leads to gaps or overlaps between hexahedra and 
tetrahedra at non-conforming junctions when hexahedron faces are not planar
(see figure \ref{fig:geo_hjunction}a.). The mesh geometry we are looking for
should satisfy definition \ref{defn:hmesh}.

\begin{defn}\label{defn:hmesh}
    {\textit{Geometric hybrid hexahedral-tetrahedral mesh}} \\ 
    A hybrid hexahedral-tetrahedral mesh is the union of a set of (non-degenerate)
    hexahedra and of a set of (non-degenerate) tetrahedra. The cell geometries
    $K_c$ of $\Omega_h$ satisfy:
\begin{itemize}
  \item \(\Omega_h = \bigcup_{i=c}^N K_c\)
  \item the intersection \(K_i \cap K_j\) of two distinct tetrahedra is either
      empty, or reduced to a common vertex, or an entire common edge, or an
      entire common face (triangle)
  \item the intersection \(K_i \cap K_j\) of two distinct hexahedra is either
      empty, or reduced to a common vertex, or an entire common edge, or an
      entire common face (quadrilateral)
  \item the intersection \(K_i \cap K_j\) of an hexahedron and a tetrahedron is
      either empty, or a common vertex, or an entire common edge, or an entire
      quadrilateral diagonal, or a triangle such that there exists another
      tetrahedra $K_l$ which intersection with $K_i$, $K_i \cap K_l$, is
      another tetrahedron face and \(K_i \cap (K_j \cup K_l)\) is a
      quadrilateral face of $K_i$ \QEDB
\end{itemize}
\end{defn}

Following the standard finite element approach, we define cells as images of
the reference tetrahedron $\hat T$ and of the reference hexahedron $\hat Q$ 
(detailed in appendices) by one-to-one mapping functions $\*F_c$:
\begin{align*}
    K_c =\ &\*F_c(\hat T) \text{ if the cell } c \text{ is a tetrahedron} \\
    K_c =\ &\*F_c(\hat Q) \text{ if the cell } c \text{ is a hexahedron}
\end{align*}
The edges and the faces in the definition \ref{defn:hmesh} are images of edges and faces
of the reference hexahedron or tetrahedron (so they can be curved).

The first question we consider is how to define the mappings $\*F_c$, for both
tetrahedra and hexahedra, in order to satisfy the general definition
\ref{defn:hmesh} (geometric continuity).

\paragraph*{Hexahedron mappings}
Let us start by considering the mapping $\*F_Q$ of a hexahedron $K_q \in \Omega_h$.
The standard mapping from the reference hexahedron $\hat Q$ (unit cube) is the
so-called tri-affine mapping. It is based on the function space of
polynomials of degree one in each variable $\QQ_1$. Each component $F_{Q,j}$ is
in $\QQ_1$ so $\*F_{Q}$ is in $(\QQ_1)^3$. We have the decomposition:
$$ \forall \hat{\*p} \in \hat Q, \quad \*F_{Q} (\hat{\*p}) = \sum_{i=1}^8 \*a_i \ \hps_i(\hat{\*p}) $$ 
where $(\hps_i)_{i=1..8}$ is the basis of $\QQ_1$ detailed in the appendix \ref{app:cube}
and $\*a_i$ are the vertices of the hexahedron $Q$ given in the input mesh.


It is important to notice that this tri-affine mapping has components which are
polynomials of degree 3 (product of three degree one).  If we consider the
restriction to a face of $\hat Q$, then the restricted mapping is bi-affine,
i.e. a bi-variate polynomial of degree 2. 
It implies that the surface of the mapped face is a quadric, specifically a
hyperbolic paraboloid. So for an arbitrary hexahedron, its faces are not planar
in general.

\begin{figure}
  \centering
  \includegraphics[width=\columnwidth]{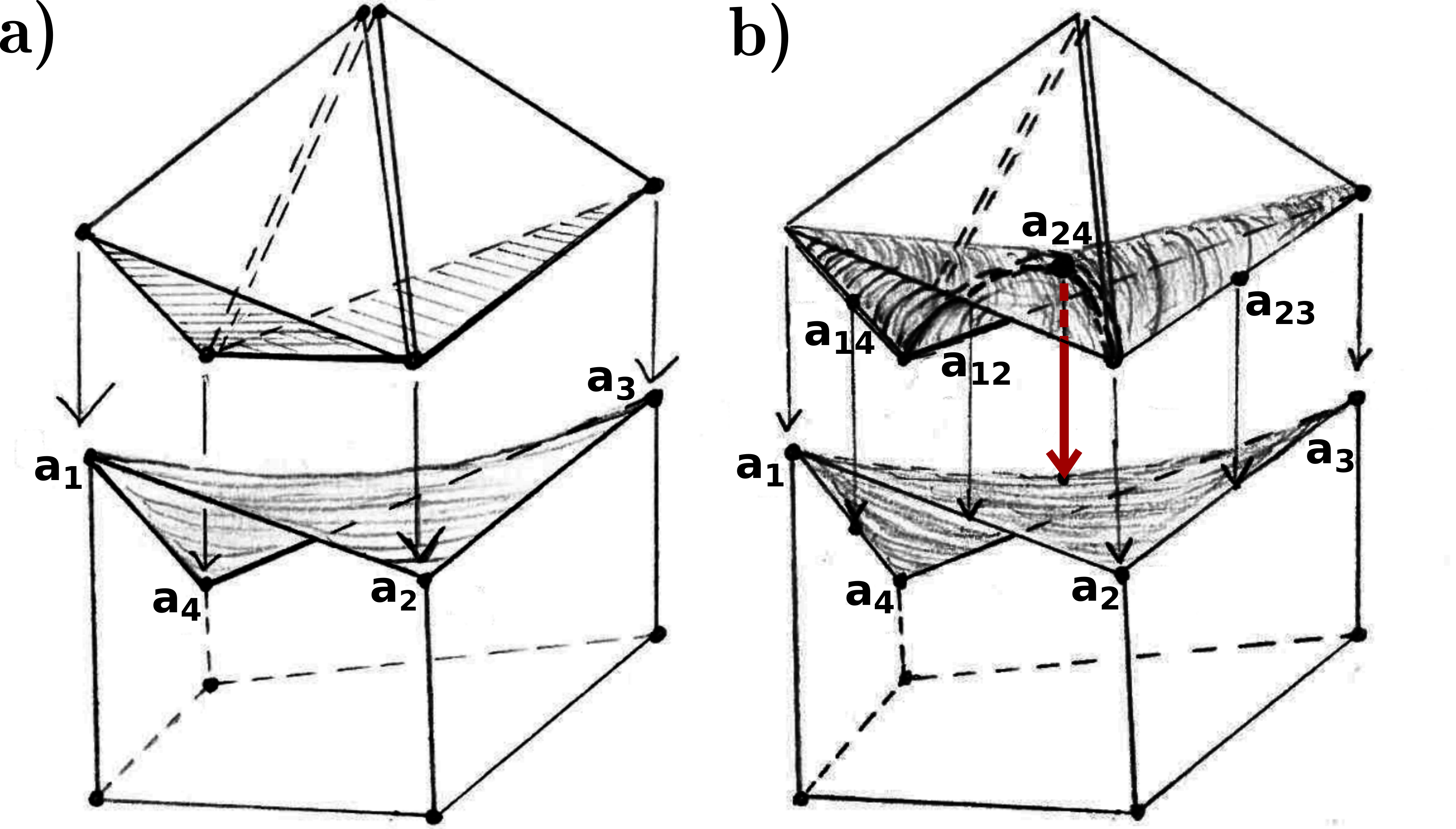} 
  \caption{Non-conforming hexahedron-tetrahedra junction. a) Tetrahedra affine 
      mappings, b) Tetrahedra quadratic mappings}
  \label{fig:geo_hjunction}
\end{figure}

This is not an issue for meshes composed only of hexahedra because they share
common vertices at element junctions. But for hybrid hexahedral-tetrahedral mesh,
it will not always be possible to glue two tetrahedra (planar faces) with a tri-affine
hexahedron, which faces are quadric surfaces. This incompatibility is
illustrated in the figure \ref{fig:geo_hjunction}a. 
Unfortunately, automated hex-dominant mesh generation algorithms, such as
\cite{sokolov2015}, produce hexahedra with non-planar faces most of the time as
they solely use combinatorial definitions. Therefore, we need to take these
particularities into account in our finite element mappings.

\paragraph*{Tetrahedron mappings}
The simplest mapping $\*F_T$ for an arbitrary tetrahedron is the affine
one, where the three components lie in the space of polynomials of degree
one $\PP_1$. However this mapping generates faces which are planar, so it would
not be possible in general to continuously connect a tetrahedron to an
arbitrary hexahedron (which faces are quadric).

We propose to solve this issue by using quadratic mappings for tetrahedra
(see figure \ref{fig:map_tet_P2}). It allows us to deform tetrahedra geometry
in order to fit exactly with the quadric hexahedron faces at hybrid interfaces.

\begin{figure}
  \centering
  \includegraphics[width=\columnwidth]{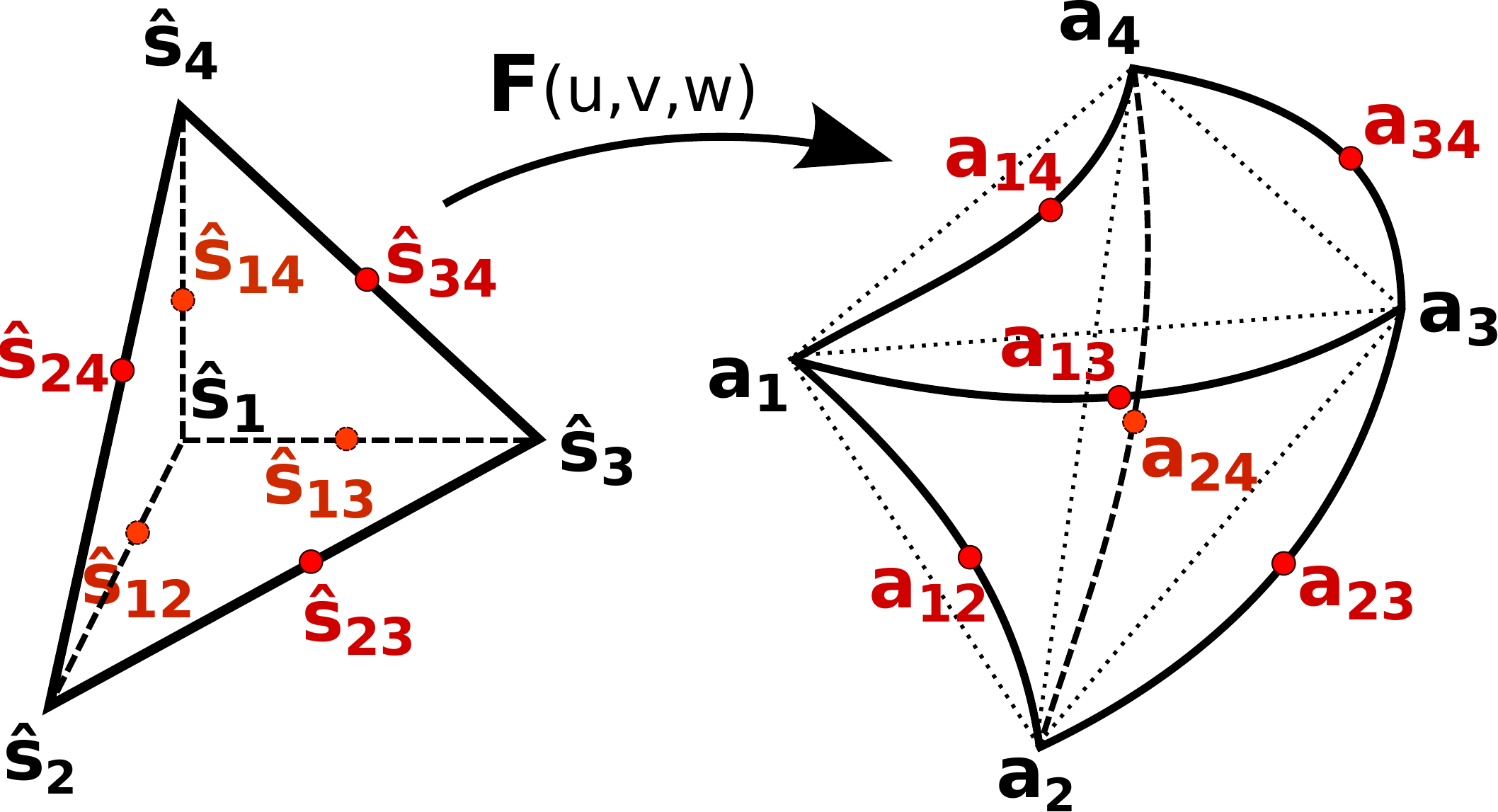} 
  \caption{Quadratic mapping of the reference tetrahedron}
  \label{fig:map_tet_P2}
\end{figure}

Consider the mapping $\*F_T$ which maps the reference tetrahedron $\hat T$ to
the actual tetrahedron $T$.  Instead of taking $\*F_T$ in $(\PP_1)^3$, we
take $\*F_T$ in $(\PP_2)^3$, where $\PP_2$ is the space of polynomials of
degree 2, detailed in the appendix \ref{app:tet}. We have the decomposition: 
$$ \*F_T = \sum_{i=1}^{4} \*a_i \ \hphi_i + \sum_{1 \leq i < j \leq 4} \*a_{ij} \ \hphi_{ij}$$
where the $\*a_i$'s are the vertices of the tetrahedron $T$ given in the input mesh,
the $\*a_{ij}$'s are the midpoints of edges $i-j$ and $(\hphi_i, \hphi_{ij})_{i, ij}$ is
the basis of $\PP_2$ detailed in the appendix \ref{app:tet}.

By moving the midpoints $\*a_{ij}$, it is possible to control the
deviation from the affine tetrahedron of the geometry, as shown in figure
\ref{fig:map_tet_P2}.
Initially, the $\*a_{ij}$ are set as $\*a_{ij} = \frac{\*a_i + \*a_j}{2}$,
this reproduces the affine mapping but we now have the freedom to change one of
these coefficients to deform the geometry of tetrahedra.

We assume that the $\*a_{ij}$ coefficients are chosen in such way  that $\*F_T$
remains a one-to-one mapping. This is the case if the deviation from the actual
edge midpoint is sufficiently small. For a more detailed discussion on the
quadratic geometry validity, one can refer to \cite{george2012}. This type of
curved geometry, often referred to as isoparametric elements, is used in the
finite element framework to produce meshes that better fit non-polygonal
boundaries of 3D models.

At this point, the mesh geometry is defined by:
\begin{align*}
    \Omega_h = \bigcup_c &K_c \ \text{ with } \\
    K_c = &\*F_c(\hat T),\ F_c \in (\PP_2)^3 \text{ if } K_c \text{ is a tetrahedron} \\
    K_c = &\*F_c(\hat Q),\ F_c \in (\QQ_1)^3 \text{ if } K_c \text{ is a hexahedron}
\end{align*}

To satisfy definition \ref{defn:hmesh}, we need to constrain the new
degrees of freedom in tetrahedra (edge midpoints) at hybrid interfaces to ensure
that the elements match exactly.

\paragraph*{Continuity of the geometry at interfaces between elements} $\ $ \\
\indent 1. For two connected hexahedra, both geometries match on the common face
because both tri-affine mappings are fully determined by their values at
reference vertices, and these values are nothing else than the hexahedron
vertices of the combinatorial definition of the input mesh.\\

2. At the interface between two connected tetrahedra defined by quadratic
mappings, the geometry continuity is achieved if both quadratic mappings share
6 coefficients, associated with values taken at vertices and edge midpoints
of faces of the reference tetrahedron. Three of these equalities, at vertices,
are guaranteed by the input mesh specification \ref{defn:inputspec}. The last
three, at edge midpoints of reference faces, are added as constraints to our
geometric mesh definition. This constraint is referred as $(\mathcal{C}\text{-T})$.\\

3. For hybrid hexahedron-tetrahedron interface, such as in figure
\ref{fig:geo_hjunction}, the key to achieve the continuity is to set the
coefficient which corresponds to the hexahedron face diagonal to the point at
the center of the face in both tetrahedra mappings. In figure
\ref{fig:geo_hjunction}b., $\*a_{24}$ is set to
$\frac{\*a_1+\*a_2+\*a_3+\*a_4}{4}$ instead of $\frac{\*a_2+\*a_4}{2}$. Other
tetrahedra mapping coefficients associated with edge midpoints, such as $a_{23}$,
are left at actual edge midpoints. The formalisation of this claim is given by the
proposition \ref{prop:hjunction}.\\

\textit{Notations}
To lighten formulas, we adopt the following convention:
the pre-images of vertices $\*a_i$ of the mesh by element mappings $\*F_e$
are denoted by $\hat{\*a}_i = \*F_e^{-1}(\*a_i)$. These
$\hat{\*a}_i$ are vertices of the reference elements. When more than one 
mapping is involved in a formula, the right mapping can be deduced from the
domain of the function which applies to $\hat{\*a}_i$. One should be careful
that it implies that in the same formula, two reference points denoted by
$\hat{\*a}_i$ can refer to two distinct points (but by renumbering of nodes in
the mappings, it is always possible to get to a configuration where both refer to
the same point).

\begin{prop}{Non-conforming hexahedron-tetrahedra junction}{\label{prop:hjunction}} \\
Let $Q$ be a hexahedron defined by the tri-affine mapping $\*F_Q$ of the
reference hexahedron such that $\*a_1, \*a_2, \*a_3, \*a_4$ are the vertices of the
face $\Sigma_q$ of $Q$. \\ Let $T$ be a tetrahedron defined by the quadratic mappings
$\*F_T$ of the reference tetrahedron such that $\*a_1, \*a_2, \*a_3$ are the
vertices of the face $\Sigma_t$ of $T$.

Then $\Sigma_t \subset \Sigma_q$, i.e. the geometry of $Q \cup T$ is continuous, if 

\begin{itemize}
  \item For common edges  \([\*a_i, \*a_j], \ (i,j) = (1,2), (1,4)\) \\
      $$ \*F_T(\frac{\hat{\*a}_i + \hat{\*a}_j}{2}) = \frac{\*a_i + \*a_j}{2} $$
  \item For the diagonal $(a_2, a_4)$ of the face $\Sigma_q$, which is also a
      edge of $\Sigma_t$: 
      $$ \*F_T(\frac{\hat{\*a}_2 + \hat{\*a}_4}{2}) =
      \frac{\*a_1 + \*a_2 + \*a_2 + \*a_4}{4}$$
\end{itemize}
\end{prop}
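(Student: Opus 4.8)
The plan is to reduce the set-containment $\Sigma_t\subset\Sigma_q$ to an identity between two quadratic polynomial maps on a reference triangle, and then to invoke $\PP_2$ Lagrange unisolvence. First I would parameterize $\Sigma_q$. As recalled above, the restriction of the tri-affine map $\*F_Q$ to the reference face is bi-affine; fixing the vertex ordering from the combinatorial data, I write it over $(s,t)\in[0,1]^2$ as
\[
\*G(s,t)=(1-s)(1-t)\,\*a_1+s(1-t)\,\*a_2+st\,\*a_3+(1-s)t\,\*a_4,
\]
so the corners $(0,0),(1,0),(1,1),(0,1)$ map to $\*a_1,\*a_2,\*a_3,\*a_4$ and $\*G$ is a polynomial of total degree $\le 2$ (the quadric face). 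The diagonal $[\*a_2,\*a_4]$ of $\Sigma_q$ is exactly the image of the segment $s+t=1$, and the triangle cut from the face on the $\*a_1$ side is $\*G(T_1)$ with $T_1=\{(s,t):s,t\ge 0,\ s+t\le 1\}$; the restriction $\*G|_{T_1}$ is still of total degree $\le 2$.

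Next I would evaluate $\*G$ at the three edge midpoints of $T_1$. A direct computation gives $\*G(\tfrac12,0)=\tfrac{\*a_1+\*a_2}{2}$ and $\*G(0,\tfrac12)=\tfrac{\*a_1+\*a_4}{2}$, matching the two edge conditions imposed on $\*F_T$ along $[\*a_1,\*a_2]$ and $[\*a_1,\*a_4]$. The key evaluation is at the midpoint of the diagonal segment: $\*G(\tfrac12,\tfrac12)=\tfrac14(\*a_1+\*a_2+\*a_3+\*a_4)$, the centroid of the face. This is precisely the value the hypothesis assigns to $\*F_T$ at the midpoint of the reference edge over $[\*a_2,\*a_4]$, and it is what makes the bilinear quadric and the curved tetrahedron face meet along the diagonal.

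Finally I would compare the two maps on a common reference triangle. Let $\alpha$ be the affine bijection from the reference-tetrahedron face carrying $\Sigma_t$ onto $T_1$ that sends vertices $\hat{\*a}_1\mapsto(0,0)$, $\hat{\*a}_2\mapsto(1,0)$, $\hat{\*a}_4\mapsto(0,1)$. Then $\*G\circ\alpha$ and the restriction of $\*F_T$ to that reference face are both componentwise in $\PP_2$; they agree at the three vertices (both equal $\*a_1,\*a_2,\*a_4$ by the input specification) and, by the previous step, at the three edge midpoints. Since the three vertices together with the three edge midpoints form a $\PP_2$-unisolvent set on a triangle, the two quadratic maps coincide identically, giving $\Sigma_t=\*F_T(\text{face})=\*G(T_1)\subset\*G([0,1]^2)=\Sigma_q$. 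The one step requiring care is justifying that the comparison is genuinely between two degree-$2$ maps on the same triangle: restricting the bilinear $\*G$ to $T_1$ and precomposing with the affine $\alpha$ preserves total degree $\le 2$, so the six-node Lagrange unisolvence legitimately applies. Everything else is the elementary midpoint arithmetic above.
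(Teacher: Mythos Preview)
Your argument is correct and is exactly the natural proof: parameterize the bilinear face, restrict to the triangle on the $\*a_1$ side, and match the quadratic tetrahedron-face map against it via six-node $\PP_2$ unisolvence. The paper defers the proof to supplemental material, but the appendix propositions on restrictions of $\QQ_1$ and $\PP_2$ to faces are set up precisely for this comparison, so your route is the intended one; your explicit check that $\*G(\tfrac12,\tfrac12)=\tfrac14(\*a_1+\*a_2+\*a_3+\*a_4)$ and your remark that the affine reparametrization $\alpha$ preserves total degree $\le 2$ are the two points that actually need saying.
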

\begin{proof}
    See supplemental material.
\end{proof}

The final mesh geometry with an explicit definition of the mappings is given by 
the following definition.

\begin{defn} Let $\calM$ be the input combinatorial hybrid mesh which satisfies
    the specifications \ref{defn:inputspec}. For each element of $\calM$,
    vertices are denoted by $\*a_j$ where $j$ is the local index. $\hat{\*a}_j$ is
    the pre-image by the mapping $\*F_i$ of the associated element.  
    
    The space partition of $\Omega$ is the union $\Omega_h = \bigcup_{i=1}^N
    K_i$. The $N$ element geometries are defined by the following mappings
    $(\*F_i)_{i=1..N}$:

\begin{itemize}
    \item If the $i$-th element is a hexahedron,  
        $K_i = \*F_i(\hat Q)$ with $\*F_i \in (\QQ_1)^3$ determined by
        $\*F_i(\hat{\*a}_j) = \*a_j, \ 1 \leq j \leq 8$
    \item If the $i$-th element is a tetrahedron, 
        $K_i = \*F_i(\hat T)$ with $\*F_i \in (\PP_2)^3$ determined by
        $\*F_i(\hat{\*a}_j) = \*a_j, \ 1 \leq j \leq 4$ and for $ 1 \leq j < k
        \leq 4$:
        \begin{itemize}
            \item $\*F_i(\frac{\hat{\*a}_{j}+\hat{\*a}_k}{2}) = \frac{\*a_j +
                    \*a_k + \*a_l + \*a_f}{4}$ if $(\*a_j, \*a_k)$ is a
                diagonal of a hexahedron face, whose vertices are $\*a_j, \*a_l,
                \*a_k, \*a_f$.
            \item $\*F_i(\frac{\hat{\*a}_{j}+\hat{\*a}_k}{2}) = \frac{\*a_j +
                    \*a_k}{2}$ else
        \end{itemize}
\end{itemize}
\label{defn:geomesh}
\end{defn}

This definition \ref{defn:geomesh} satisfies both constraints
$(\mathcal{C}\text{-T})$ at tetrahedra interfaces and the assumptions of the
proposition \ref{prop:hjunction} at hybrid interfaces.
So a mesh geometry defined by \ref{defn:geomesh} satisfies the definition
\ref{defn:hmesh} (continuity of the geometry).

\subsection{Continuity of the function spaces}

From now on, we assume that $\Omega_h$ is a partition of the domain $\Omega$
given by the definition \ref{defn:geomesh}.  The objective is to use the hybrid
mesh $\Omega_h$ to build a function space, in which functions are piecewise
defined (element by element). Following the standard finite element approach of Ciarlet
\cite{ciarlet1978}, we use two ingredients: function spaces described on
reference elements and one-to-one element mappings. More specifically:

\noindent
1. For our hybrid function space, we use the low-order polynomial spaces
$\QQ_1$ and $\PP_2$ which have simple Lagrange-based basis $(\hps_i)_{i=1..8}$,
$(\hphi_i)_{i=1..10}$, defined respectively on the reference hexahedron $\hat
Q$ and the reference tetrahedron $\hat T$ (see appendix).

\noindent
2. The inverses of the element mappings defined in \ref{defn:geomesh} are used to
get to the reference elements from anywhere in the actual mesh $\Omega_h$:
\begin{align*}
\forall \*p \in \Omega_h, \quad &
&&\exists \*F_Q \in (\QQ_1)^3 \text{ such that } \hat{\*p}= \*F_Q^{-1}(\*p) \in
\hat Q \\
&\text{or}&&\exists \*F_T \in (\PP_2)^3 \text{ such that } \hat{\*p} =
\*F_T^{-1}(\*p) \in \hat T
\end{align*}
Note that in practice, the inverse mappings never need to be computed explicitly
(see section \ref{sec:apppoisson}). \\

Consider a hexahedron $K_q \in \Omega_h$ and a function $\hat f \in \QQ_1$
defined by its values $(f_i)_{i=1..8}$ at the 8 vertices $\hat q_i$ of $\hat
Q$. Then by composition, we form the function $f_{|K_q}$ defined by:
$$ \forall \*p \in K_q, \; f_{|K_q}(\*p) = \hat f \circ F_Q^{-1}(\*p) 
    =  \hat f (\hat{\*p}) $$ 
The same construction can be done for a tetrahedron $K_t \in \Omega_h$ by
taking $\hat f \in \PP_2$. So our space is composed of functions, whose
restrictions on each elements are defined by the composition of a polynomial
(in $\QQ_1$ or in $\PP_2$) and the inverse mapping of the element. We name it
$\dhyb$, for discontinuous hybrid space:
\begin{align*}
    \dhyb = \{ f \in& L^2(\Omega_h) \text{ such that } f_{|K_c} = \hat f
        \circ F_{K_c}^{-1}, \\
    &\hat f \in \QQ_1 \text{ if } K_c \in \Omega_h \text{ is a hexahedron} \\ 
    &\hat f \in \PP_2 \text{ if } K_c \in \Omega_h \text{ is a tetrahedron} \}
\end{align*}

On each element, the functions of this space are continuous (composition of a
polynomial and a continuous mapping). The continuity at element interfaces
needs to be enforced (see below).

\begin{rmq} 
    It is important to notice that the space $\dhyb$ is not composed of
    polynomials, because if mapping components are polynomials (of degree 2),
    inverse mappings are not.
\end{rmq}

As stated in the section introduction, our goal is to build a \emph{continuous}
function space $\hyb \subset \calC^0(\Omega)$ which is suitable for classic
finite element methods. We achieve this by adding constraints at interfaces
between elements in $\dhyb$. We propose two continuous function spaces:
$\hyb_{12}$ and $\hyb_1$. $\hyb_{12}$ is the space with the minimum of
constraints applied to $\dhyb$ to ensure continuity and $\hyb_1$ is a space
with more constraints but easier to manipulate, more in the spirit of our
initial objectives.

Let us look at interfaces between elements to determine explicitly the
constraints. We only consider surface interfaces because continuity at edges is
guaranteed by continuity at surface interfaces.

\paragraph*{Continuity conditions at element surface interfaces} $\ $\\
\indent 1. \textit{Between two hexahedra.} A function $f \in \dhyb$ is continuous
at a hexahedra interface if its restrictions to both elements are equal at the common face vertices.
Below is a more detailed explanation: \\
Let consider two connected hexahedra $Q_1, Q_2$
which share the face $\Sigma_q$, with vertices $\*a_1, \*a_2, \*a_3,
\*a_4$. The tri-affine mappings are respectively $\*F_1$ and $\*F_2$.
The pre-images of the face by the mappings are denoted by $\hat \Sigma_1 =
\*F_1^{-1}(\Sigma_q),\ \hat \Sigma_2 = \*F_2^{-1}(\Sigma_q)$.  \\ 
The restrictions to the common face $f_{|Q_1 \cap \Sigma_q}, f_{|Q_2 \cap
    \Sigma_q}$ can be decomposed as: 
$f_{|Q_1 \cap \Sigma_q} = \hat f_{1|\hat \Sigma_1} \circ \*F_{1|\Sigma_q}^{-1}$
and $f_{|Q_2 \cap \Sigma_q} = \hat f_{2|\hat \Sigma_2} \circ
\*F_{1|\Sigma_q}^{-1}$.  Both $\hat f_{1|\Sigma_1}, \hat f_{2|\Sigma_2}$ are
determined by their values at vertices of $\hat \Sigma_1, \hat \Sigma_2$ (see
appendix). So both restrictions are equal if 
\begin{align*}
\forall i \in [1,4], \quad \quad 
&\hat f_{1|\hat \Sigma_1}(\*F_1^{-1}(\*a_i)) = \hat f_{2|\hat
    \Sigma_2}(\*F_2^{-1}(\*a_i)) \\ \Leftrightarrow \quad &f_{|Q_1}(\*a_i) =
f_{|Q_2}(\*a_i)
\end{align*}

We denote $(\calC\text{-}\mathrm{I})$ this continuity condition at
hexahedra interfaces
Using the notation convention introduced for proposition \ref{prop:hjunction},
this can be re-written as:
$$ \forall i \in
[1,4], \quad \hat f_{1|\hat \Sigma_1}(\hat{\*a}_i) = \hat f_{2|\hat
    \Sigma_2}(\hat{\*a}_i) $$ \\

2. \textit{Between two tetrahedra.} A function $f \in \dhyb$ is continuous
at a tetrahedra interface if its values are equal at the three vertices and
at the three edge midpoints of the common face. More specifically:

Let consider two connected tetrahedra $T_1, T_2$ which share the face $\Sigma_q$,
with vertices $\*a_1, \*a_2, \*a_3$.  The quadratic mappings are
respectively $\*F_1$ and $\*F_2$.  The restrictions $f_{|T_1 \cap \Sigma_q},
f_{|T_2 \cap \Sigma_q}$ are determined by their values at the three vertices and at
the three edge midpoints of the pre-images of $\Sigma_q$ by the mappings (see
appendix), denoted by $\hat \Sigma_1 = \*F_1^{-1}(\Sigma_q),\ \hat
\Sigma_2 = \*F_2^{-1}(\Sigma_q)$. Both restrictions are equal if:

\begin{itemize}
  \item For common vertices $a_i, \ i \in {1, 2, 3}$:
$$\hat f_{1|\hat \Sigma_1}(\hat{\*a}_i) = \hat f_{2|\hat \Sigma_2}(\hat{\*a}_i) $$

  \item For common edges \([a_i, a_j], \ (i,j) = (1,2), (1,3), (2,3)\): 
$$\hat f_{1|\hat \Sigma_1}(\frac{\hat{\*a}_i+\hat{\*a}_j}{2}) = \hat f_{2|\hat \Sigma_2}(\frac{\hat{\*a}_i+\hat{\*a}_j}{2}) $$
\end{itemize}

We denote $(\calC\text{-}\mathrm{II})$ this continuity condition at
tetrahedra interfaces. \\

3. \textit{Between one hexahedron and one tetrahedron.} A function $f \in
\dhyb$ is continuous at a non-conforming hexahedron-tetrahedron interface if its
values are equal at the three common vertices, at the two common edge midpoints
and at the quadrilateral center, which is a edge midpoint of the triangle. We
formalize this last continuity condition $(\calC\text{-}\mathrm{III})$ with
proposition \ref{prop:hjunction_c0}:

\begin{prop}\label{prop:hjunction_c0}{Continuity of the function spaces at hybrid junctions} \\
Let $Q$ be a hexahedron and $T$ be a tetrahedron which share the triangular
face $\Sigma_t$ (vertices $\*a_1, \*a_2, \*a_4$) of $T$. The associated
quadrilateral face of $Q$ is denoted by $\Sigma_q$ (vertices $\*a_1, \*a_2, \*a_3,
\*a_4$). This configuration is shown in figure \ref{fig:geo_hjunction}b.. The
element mappings $\*F_Q, \*F_T$ satisfy the mesh definition \ref{defn:geomesh}.
\\ Let $f_h \in \dhyb$. Its restrictions  $f_{|Q}, f_{|T}$ are defined by the
compositions $f_{|Q} = \hat f_Q \circ \*F_Q^{-1}$ and $f_{|T} = \hat f_T \circ
\*F_T^{-1}$. \\ $f_h$ is continuous at the hybrid interface $\Sigma_t$, i.e.
$f_{|T \cap \Sigma_t} = f_{|Q \cap \Sigma_q} \text{ on } \Sigma_t$, if:

\begin{enumerate}
  \item At common vertices $\*a_i, \ i \in {1, 2, 4}$:
$$\hat f_{T}(\hat{\*a}_i) = \hat f_{Q}(\hat{\*a}_i) $$
      (these points are degree of freedom for both functions $\hat f_Q \in
      \QQ_1, \hat f_T \in \PP_2$)

  \item At common edges \((\*a_i, \*a_j), \ (i,j) = (1,2), (1,4) \):
$$
\hat f_{T}(\frac{\hat{\*a}_i + \hat{\*a}_j}{2}) =
\hat f_{Q}(\frac{\hat{\*a}_i + \hat{\*a}_j}{2})
$$
(these points are degree of freedom only for $\hat f_T \in \PP_2$)

  \item At the diagonal $(\*a_2, \*a_4)$ of $\Sigma_q$:
$$
\hat f_{T}(\frac{\hat{\*a}_2 + \hat{\*a}_4}{2}) =
\hat f_{Q}(\frac{\hat{\*a}_1+\hat{\*a}_2+\hat{\*a}_3+\hat{\*a}_4}{4})
$$
      (this point is a degree of freedom only for $\hat f_T \in \PP_2$)
\end{enumerate}
\end{prop}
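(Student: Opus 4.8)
The plan is to reduce the equality of the two surface restrictions to the equality of two polynomials of $\PP_2$ over the reference triangular face, and then to invoke unisolvence of $\PP_2$ at its six Lagrange nodes. Denote by $\hat\Sigma$ the preimage under $\*F_T$ of the shared triangle, a flat face of $\hat T$, and let $(\lambda_1,\lambda_2,\lambda_4)$ be its barycentric coordinates, with $\hat{\*a}_1,\hat{\*a}_2,\hat{\*a}_4$ the vertices. Let $G$ be the bilinear ($\QQ_1$) parametrization of $\Sigma_q$ obtained by restricting $\*F_Q$ to the corresponding square face of $\hat Q$, written in face coordinates $(u,v)\in[0,1]^2$ with corners $(0,0),(1,0),(1,1),(0,1)$ mapping to $\*a_1,\*a_2,\*a_3,\*a_4$. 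Both restrictions are defined on $\Sigma_t$ since geometric continuity ($\Sigma_t\subset\Sigma_q$) is guaranteed by the mesh definition \ref{defn:geomesh} together with proposition \ref{prop:hjunction}.

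First I would make the change of variables explicit. Writing the $\PP_2$ interpolant of $\*F_T$ on the face with the nodal values prescribed by definition \ref{defn:geomesh}, and using that the $\PP_2$ interpolant of an affine map is that affine map, one obtains
\[
  \*F_{T|\hat\Sigma}(\lambda) = \lambda_1\*a_1 + \lambda_2\*a_2 + \lambda_4\*a_4 + 4\lambda_2\lambda_4\,\*d, \qquad \*d = \frac{\*a_1+\*a_3-\*a_2-\*a_4}{4},
\]
where $\*d$ appears precisely because the diagonal midpoint is sent to the face center $\frac{\*a_1+\*a_2+\*a_3+\*a_4}{4}$ rather than to $\frac{\*a_2+\*a_4}{2}$. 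Expanding $G$ gives $G(u,v)=\*a_1+u(\*a_2-\*a_1)+v(\*a_4-\*a_1)+4uv\,\*d$ with the same vector $\*d$. Comparing the two expressions yields the identity $\*F_{T|\hat\Sigma}(\lambda)=G(\lambda_2,\lambda_4)$, i.e. $\*F_{T|\hat\Sigma}=G\circ\Phi$ with the affine map $\Phi(\lambda)=(\lambda_2,\lambda_4)$.

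This identity is the crux, and I expect it to be the main obstacle: it is exactly where the constraint of proposition \ref{prop:hjunction} (diagonal midpoint at the face center) is needed, since it forces the quadratic coefficient of $\*F_T$ to coincide with the bilinear cross-term of $G$; with the naive affine midpoint the composite $\Phi$ would fail to be affine and the patches would not match. Once it is established, the rest is routine. Pulling both restrictions back through $\*F_{T|\hat\Sigma}$ and using injectivity of the face mapping, $f_{|T}$ becomes $\hat f_{T|\hat\Sigma}$, a $\PP_2$ polynomial in $\lambda$, while $f_{|Q}$ becomes $\hat f_{Q|\hat\Sigma_q}\circ\Phi$; since $\hat f_{Q|\hat\Sigma_q}$ is bilinear in $(u,v)$ and $\Phi$ is affine into the $(u,v)$ plane, this composite has degree at most two on the triangle and hence also lies in $\PP_2$.

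Finally I would apply $\PP_2$-unisolvence: two quadratics on a triangle agreeing at the three vertices and three edge midpoints are equal. Evaluating $\hat f_{Q|\hat\Sigma_q}\circ\Phi$ at these six nodes, the bilinear form returns the corner value at each vertex, the average of the two incident corner values at a non-diagonal edge midpoint, and the average of all four corner values at the diagonal midpoint $(\lambda_2,\lambda_4)=(\tfrac12,\tfrac12)$ — that is, exactly $\hat f_Q(\tfrac{\hat{\*a}_1+\hat{\*a}_2+\hat{\*a}_3+\hat{\*a}_4}{4})$. Matching these against the nodal values of $\hat f_{T|\hat\Sigma}$ reproduces verbatim the three conditions (1)--(3) of the statement, so under those conditions the two $\PP_2$ pullbacks coincide, whence $f_{|Q}=f_{|T}$ on $\Sigma_t$, which is the claimed continuity. \QEDB
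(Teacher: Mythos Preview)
Your proof is correct and follows what is essentially the approach the paper sets up: the appendix propositions state that the restriction of a $\QQ_1$ function to a face is a bilinear polynomial determined by its four corner values, and the restriction of a $\PP_2$ function to a face is a quadratic determined by its six Lagrange nodes, so the paper's intended argument (deferred to supplemental material, and in any case identical in spirit to the proof of Proposition~\ref{prop:hjunction}) is exactly your reparametrization $\*F_{T|\hat\Sigma}=G\circ\Phi$ followed by $\PP_2$-unisolvence on the reference triangle. Your explicit identification of the cross-term $4\lambda_2\lambda_4\,\*d$ as the common quadratic correction in both $\*F_{T|\hat\Sigma}$ and $G$ is precisely the content of the geometric constraint, and the nodal evaluations you list reproduce conditions (1)--(3) verbatim.
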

\begin{proof} See supplemental material.
\end{proof}

By solely considering functions $f \in \dhyb$ which satisfy conditions
$(\calC\text{-}\mathrm{I})$, $(\calC\text{-}\mathrm{II})$,
$(\calC\text{-}\mathrm{III})$,  we form the function space $\hyb_{12}$ defined
as follow:
\begin{defn}{\textit{Continuous hybrid hexahedral-tetrahedral function space}} \\
Let $\Omega_h$ a partition of $\Omega$ which satisfies definition
\ref{defn:geomesh}.
\begin{align*}
   \hyb_{12} = \{ v \in \calC^0(\Omega_h) \text{ such that } v_{|K_i} = \hat v
   \circ F_{K_i}^{-1} \\
                \hat v \in \PP_2 \text{ if } K_i \text{ is a tetrahedron} \\
                \hat v \in \QQ_1 \text{ if } K_i \text{ is a hexahedron}
           \} 
\end{align*}
\label{defn:hyb12}
\end{defn}
It should be stressed that the space $\hyb_{12}$ is composed of functions formed from
$\QQ_1$ and $\PP_2$. But as in practice the proportion of tetrahedra is low, most
tetrahedra are connected to hexahedra and their degree of freedom on edge midpoints
are constrained and do not contribute to the solution approximation. Considering
this remark, it can be interesting to also remove the remaining edge midpoint degrees
of freedom at tetrahedra interfaces. This produces a smaller
continuous function space that we call $\hyb_1$. In practice it is achieved by
changing the continuity condition $(\calC\text{-}\mathrm{II})$ to a more
constraining one, denoted by $(\calC\text{-}\mathrm{II}\text{-b})$, which
forces function values at tetrahedra edge midpoints to be the average of function
values at edge vertices (when these edges are not hexahedron face diagonal). The
second point of $(\calC\text{-}\mathrm{II})$ becomes:
\begin{itemize}
  \item At common edges \([a_i, a_j]\) of tetrahedra interfaces which are not
      hexahedron face diagonal: 
\begin{align*}
    &\hat f_{1|\hat \Sigma_1}(\frac{\hat{\*a}_i+\hat{\*a}_j}{2}) = \hat f_{2|\hat \Sigma_2}(\frac{\hat{\*a}_i+\hat{\*a}_j}{2}) \\
    = &\frac{\hat f_{1|\hat \Sigma_1}(\hat{\*a}_i) + \hat f_{1|\hat \Sigma_1}(\hat{\*a}_j)}{2}
     = \frac{\hat f_{2|\hat \Sigma_2}(\hat{\*a}_i) + \hat f_{2|\hat \Sigma_2}(\hat{\*a}_j)}{2}
  \end{align*}
\end{itemize}
With this condition $(\calC\text{-}\mathrm{II}\text{-b})$, the resulting
function space $\hyb_1$ is still continuous but function restrictions to
tetrahedra which are not connected to hexahedra are formed from $\PP_1$
(polynomials of degree 1). More importantly, functions in $\hyb_1$ are entirely
defined by their values at mesh vertices, resulting in functions easier to
manipulate and a smaller linear system in the finite element method.

\begin{defn}{\textit{Minimal continuous hybrid hexahedral-tetrahedral function space}} \\
Let $\Omega_h$ a partition of $\Omega$ which satisfies definition
\ref{defn:geomesh}.
\begin{align*}
    \hyb_{1} = \{ v \in \hyb_{12} \text{ that satisfy } (\calC\text{-}\mathrm{II}\text{-b})\}
\end{align*}
\label{defn:hyb1}
\end{defn}
The introduced function spaces satisfy the following inclusions:
$$ \hyb_1 \subset \hyb_{12} \subset \dhyb$$
$$ \hyb_1 \subset \hyb_{12} \subset \calC^0$$

\subsection{Function space basis}
\label{sec:basis}
%

This section details explicitly the function basis of the space $\hyb_1$ and
$\hyb_{12}$.  One difficulty is these spaces have been built by using
constraints that depend of the mesh local combinatorial configuration (element
type, hybrid interface or not) and that cannot be applied blindly in a generic
way. So it is not straightforward to expose the basis of $\hyb_1, \hyb_{12}$.

Consider the hybrid mesh $\calM$ composed of $n_v$ vertices $\*a_i$ and $n_{te}$
tetrahedra edges, which midpoints nodes are denoted by $\*a_{ij}$ for edge $\*a_i -
\*a_j$ (these midpoints  satisfy the geometric continuity of the
definition \ref{defn:geomesh}). We introduce two convenient notations:
\begin{itemize}
    \item[--] $\text{Sup}(\*a_i)$ is the set of \emph{cells} adjacent to $\*a_i$.
    \item[--] $\text{TSup}(\*a_{ij})$ denotes the set of \emph{tetrahedra} which contains the edge
$\*a_i - \*a_j$.
\end{itemize}
We use an intermediary function basis $((\psi_i)_i, (\psi_{ij})_{ij})$ made of
a mix of $\QQ_1, \PP_2$ finite element basis. It is defined on each elements by:
\begin{itemize}
    \item if $K \not\in \text{Sup}(\*a_i)$: $\psi_{i|K} = 0$ 
    \item if $K \in \text{Sup}(\*a_i)$ and $K$ is a hexahedron:\\
        $\psi_{i|K} = \hat \psi_i \circ \*F_K^{-1}, \ \hat \psi_i \in \QQ_1$
        with $\hat \psi_i (\hat{\*a}_j) = \delta_{ij}$ \\
        (the $\hat{\*a}_j$'s are the pre-images of $K$ vertices by $\*F_K$)
    \item if $K \in \text{Sup}(\*a_i)$ and $K$ is a tetrahedron:\\
    $\psi_{i|K} = \hat \psi_i \circ \*F_K^{-1}, \ \hat \psi_i \in \PP_2$ \\
    \strut \hfill with $\hat \psi_i (\hat{\*a}_j) = \delta_{ij}, \; \hat \psi_i (\hat{\*a}_{jk})=0$\\
        (the $\hat{\*a}_j$'s are the pre-images of $K$ vertices by $\*F_K$ and
        the $\hat{\*a}_{jk}$'s are the pre-images of the edge midpoints)
\end{itemize}
And 
\begin{itemize}
    \item if $K \not\in \text{TSup}(\*a_{ij})$: $\psi_{ij|K} = 0$ 
    \item if $K \in \text{TSup}(\*a_{ij})$:\\
    $\psi_{ij|K} = \hat \psi_{ij} \circ \*F_K^{-1}, \ \hat \psi_{ij} \in \PP_2$ \\
    \strut \hfill with $\hat \psi_{ij} (\hat{\*a}_k) = 0, \; \hat \psi_{ij} (\hat{\*a}_{kl})=\delta_{ik}\delta_{jl}$\\
        (the $\hat{\*a}_k$'s are the pre-images of $K$ vertices by $\*F_K$ and
        the $\hat{\*a}_{kl}$'s are the pre-images of the edge midpoint nodes)
\end{itemize}

It should be noticed that $\psi_i$ is discontinuous if $\*a_i$ is the vertex of
both a tetrahedron and of a hexahedron.
$\psi_{ij}$ is discontinuous if $\*a_i$ and $\*a_j$ are hexahedron vertices.
One should also notice that the $\psi_{ij}$ functions are always zeros on hexahedra.
Think of them as correcting functions to recover continuity, only defined on
tetrahedra.

To refer to previous sections, the space generated by this basis corresponds to
$\dhyb$ with constraints $(\calC\text{-}\mathrm{I}),
(\calC\text{-}\mathrm{II})$ enforced. By the adding constraint
$(\calC\text{-}\mathrm{III})$ at hybrid interfaces we ensure continuity and
form $\hyb_{12}$. By adding $(\calC\text{-}\mathrm{II}\text{-b})$ at tetrahedra
interfaces, we reduce the generated space to $\hyb_1$.

Consider the basis $(\phi_i)_{i=1..n_v}$ of $\hyb_1$ that has degrees of
freedom only on mesh vertices. We explicit its functions using linear
combination of $\psi_i, \psi_{ij}$ that depend of the local combinatorial
configuration in the hybrid mesh. We first need to introduce two more notations for
edges:
\begin{itemize}
    \item[--] $\text{ET}(\*a_i)$ is the set of tetrahedron \emph{edges}
        which contain $\*a_i$ and which are not hexahedron face diagonal (this set can be empty).
    \item[--] $\text{ETD}(\*a_i)$ is the set of tetrahedron \emph{edges} such
        that the \emph{tetrahedron} contains $\*a_i$ and the edges are
        hexahedron face diagonal. This includes edges that do not contain $\*a_i$.
\end{itemize}

We define $\phi_i$ on each element $K$ by:
\begin{itemize}
    \item if $K$ is a hexahedron: $\phi_{i|K} = \psi_{i|K}$.
    \item if $K$ is a tetrahedron: 
        \begin{align*} 
            &\phi_{i|K} = \psi_{i|K}  \\
            &+ \frac{1}{2} \sum_{\text{edge ij} \in \text{ET}(\*a_i)}
        \psi_{ij|K} + \frac{1}{4} \sum_{\text{edge jk} \in \text{ETD}(\*a_i)} \psi_{jk|K}
    \end{align*}
\end{itemize}
The last combination can be derived by continuity arguments: if $\*a_i$ is the
vertex of a hexahedron face, $\phi_i$ is equal to $1/2$ at edge $i-j$ midpoints of
the face and is equal to $1/4$ at the face center. Now consider a tetrahedron
which is connected to this face (by a triangular face containing $\*a_i$ or by
an edge containing $\*a_i$ or by a quad diagonal containing $\*a_i$ or not). At
a connecting edge $j-k$, the only function of $(\psi_i, \psi_{ij})$ in the
tetrahedron which is non-zero at $\*a_{jk}$ is $\psi_{jk}$ (which is equal to
$1$ at $\*a_{jk}$).  Applying this argument at all connecting edges of the
tetrahedron set the coefficients of the linear combination as above.

\textbf{Important remark} 
The definition of the basis $(\phi_i)_{i=1..n_v}$ is tricky because when
$\*a_i$ is the vertex of a hybrid interface, and not on the diagonal, the
function $\phi_{i|K}$ can be non-zero on a tetrahedron $K$ which does not lie in
Sup($\*a_i$). This is a consequence of the insertion of correcting functions at
interface diagonals (via $\text{ETD}(\*a_i)$ in the definition) 
even when $\*a_i$ does not lie
on the diagonal. For instance, consider the figure \ref{fig:geo_hjunction}b.),
and let $L$ be the top-left tetrahedron, $R$ be the top-right tetrahedron and
$\*a_5$ be the top vertex, then:
\begin{align*} 
    \phi_{1|L} &= \psi_{1|L} + \frac{1}{2}(\psi_{12|L} + \psi_{14|L} + \psi_{15|L})
    + \frac{1}{4} \psi_{24|L} \\
    \phi_{1|R} &= \frac{1}{4} \psi_{24|R}
\end{align*}
One should notice that $\phi_{1|R}$ is non-zero, so it has to be considered
when computing the integrals in a finite element code. This makes the assembly of
the matrices more complicated but this is required to recover the function continuity 
on a non-conforming mesh.  \QEDB \\

To form a basis of $\hyb_{12}$, one need to proceed with the same construction
except for two changes:
\begin{itemize}
    \item[--] add functions $\phi_{ij}$ defined by $\phi_{ij|K} = \psi_{ij|K}$ at tetrahedra edges $j-k$ which are
        not hexahedron edges nor hexahedron face diagonal.
    \item[--] relax the constraints: replace $\text{ET}(\*a_i)$
by $\text{ETH}(\*a_i)$, the set of tetrahedron \emph{edges} which contain
$\*a_i$ \emph{and which are hexahedron edges}, in the last point of the basis definition.
\end{itemize}

These explicit basis descriptions are useful when implementing the spaces
$\hyb_1, \hyb_{12}$ in a finite element library: one can compute the local
element contributions by combination of the standard $\QQ_1, \PP_2$ function
space contributions.

\subsection{Properties of the continuous function spaces}
The most important property of $\hyb_{12}, \hyb_1$ is that they are subspaces of
the Sobolev space $H^1$, because $H^1$ plays a fundamental role in the theory
of partial differential equations, especially for the finite element method.
Notably, it allows to apply the Lax-Milgram theorem (see \cite{ciarlet1978})
that guarantees existence and uniqueness of the solutions of the weak
formulations used in the application section.

\begin{prop}{Subspaces of Sobolev space} \\
    Let $\Omega_h$ a hybrid mesh which satisfies the definition \ref{defn:geomesh}.
    Function spaces $ \hyb_{12}, \hyb_1 $, respectively defined by
    \ref{defn:hyb12}, \ref{defn:hyb1}, are subspaces of $H_1(\Omega_h)$.

    Consequently, $ \hyb_{12}, \hyb_1 $ are Hilbert spaces.
\end{prop}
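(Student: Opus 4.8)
The plan is to invoke the classical finite-element patching principle (see \cite{ciarlet1978}): a function that is globally continuous and piecewise $H^1$ relative to the cell partition belongs to $H^1$ of the whole domain. Since the inclusions $\hyb_1 \subset \hyb_{12} \subset \calC^0(\Omega_h)$ are already established, global continuity is available for free and the argument reduces to two ingredients, elementwise regularity and the cancellation of interface terms. Membership in $L^2(\Omega_h)$ is immediate: every $v \in \hyb_{12}$ is continuous on the compact closure of the bounded domain, hence bounded, hence square-integrable, so $\hyb_{12}, \hyb_1 \subset L^2(\Omega_h)$.

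Next I would prove elementwise $H^1$-regularity. On a cell $K_c$ we have $v_{|K_c} = \hat v \circ \*F_{K_c}^{-1}$ with $\hat v \in \QQ_1$ or $\hat v \in \PP_2$. The mapping $\*F_{K_c}$ is polynomial and, by the non-degeneracy and one-to-one hypotheses built into definition \ref{defn:geomesh}, is a diffeomorphism of the reference element onto $K_c$ whose Jacobian determinant does not vanish; by the inverse function theorem $\*F_{K_c}^{-1}$ is $\calC^\infty$ with derivatives bounded on the compact $K_c$. The chain rule then yields
$$\nabla(v_{|K_c}) = \big( (\nabla \hat v)\circ \*F_{K_c}^{-1}\big)\, \mathrm{D}\*F_{K_c}^{-1},$$
which is continuous and bounded on $K_c$, so $v_{|K_c} \in W^{1,\infty}(K_c) \subset H^1(K_c)$.

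The heart of the proof is the patching step. For an arbitrary $\varphi \in \calC_c^\infty(\Omega)$ I would split $\int_{\Omega_h} v\, \partial_i \varphi$ over cells and integrate by parts on each:
$$\int_{\Omega_h} v\, \partial_i \varphi = \sum_c \Big( -\int_{K_c} (\partial_i v_{|K_c})\,\varphi + \int_{\partial K_c} v_{|K_c}\, \varphi\, n_i \Big).$$
The boundary contributions carried by $\partial\Omega$ vanish since $\varphi$ has compact support, leaving only the interior interfaces. On any interior interface the two adjacent cells induce opposite outward normals, and the traces of $v$ from the two sides coincide because $v \in \calC^0(\Omega_h)$, so the two surface integrals cancel pairwise. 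The one delicate case is a non-conforming hybrid junction, where a single quadrilateral hexahedron face is paired with two triangular tetrahedron faces meeting along the diagonal: there the hexahedron trace must match, on each sub-triangle, the trace of the corresponding tetrahedron, while the two tetrahedral traces must agree along the diagonal. These are precisely the continuity conditions $(\calC\text{-}\mathrm{III})$ of proposition \ref{prop:hjunction_c0} together with $(\calC\text{-}\mathrm{II})$ (or $(\calC\text{-}\mathrm{II}\text{-b})$ for $\hyb_1$), so the interface terms cancel here as well. Hence $\int_{\Omega_h} v\, \partial_i \varphi = -\int_{\Omega_h}(\partial_i v)\,\varphi$ with $\partial_i v$ the piecewise derivative of the previous step, which identifies the distributional gradient with an $L^2(\Omega_h)$ function and proves $v \in H^1(\Omega_h)$.

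For the Hilbert-space assertion I would observe that $\hyb_{12}$ and $\hyb_1$ are finite-dimensional, the mesh being finite; a finite-dimensional subspace of the Hilbert space $H^1(\Omega_h)$ is closed, hence complete for the induced inner product, hence itself a Hilbert space. I expect the main obstacle to be the bookkeeping of the interface cancellation at non-conforming junctions, namely verifying that the hexahedron trace decomposes over the two sub-triangles exactly as the union of the two tetrahedron traces with compatible normal orientation; this is where the continuity conditions do the real work. The elementwise regularity step is routine once the non-vanishing of the Jacobian is granted.
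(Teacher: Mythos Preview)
Your proposal is correct and follows essentially the same route as the paper: the paper's proof is simply a pointer to the standard patching argument in \cite[p.~47]{ern2004}, noting that only the mesh assumptions change while the core step---continuity of the function space forcing the interface terms to cancel---is unchanged. You have spelled out exactly that argument, with the additional care of explaining how the non-conforming hexahedron--tetrahedron interfaces fit into the cancellation; this is the ``slight change in mesh assumptions'' the paper alludes to but does not detail.
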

\begin{proof}
The proof can be adapted from \cite[p. 47]{ern2004}: the assumptions on the
mesh change slightly but it does not affect the rest of the proof which relies 
on the continuity of the function space.
\end{proof}


\section{Examples of applications to partial derivate equations}
In this section, we solve Poisson and the linear elasticity problems with the
continuous function spaces $\hyb_{12}, \hyb_1$. For simple problems where the
analytical solution is known, we compute errors in $L^2$-norm and compare to
standard finite elements (tri-linear hexahedra $\QQ_1$ and linear tetrahedra $\PP_1$).

\subsection{Poisson problem}
\label{sec:apppoisson}
Consider the following boundary value problem composed of the \textit{Poisson
    equation } (\ref{eq:poisson}) subject to homogeneous Dirichlet boundary conditions
(\ref{eq:bcdirichlet}).
\begin{align}
		-\Delta u=f  & \qquad \text{in} \quad \Omega \label{eq:poisson} \\
		u = 0      &  \qquad \text{on} \quad \partial \Omega \label{eq:bcdirichlet}
\end{align}
where $u$ is the unknown (temperature in heat equation for instance), $f$ a source term and
$\partial \Omega$ the domain boundary.

\paragraph*{Weak formulation}
Following the standard Galerkin approach, we project (\ref{eq:poisson}) onto a approximation
space $V$ to obtain the weak formulation (\ref{eq:poissonweak}).
\begin{align}
    \label{eq:poissonweak}
    \forall v \in V, \; \int_\Omega \! - \Delta u \ v \ \mathrm{d}x &= \int_\Omega \!  f \ v \ \mathrm{d}x
\end{align}
Since $V$ is sufficiently regular, namely $V \subset C^0(\Omega) \cap
H^1(\Omega)$ (this is the case for $\hyb_1, \hyb_{12}$), we can use
integration by parts formula. The Dirichlet boundary condition is taken into account by restricting
ourselves to $V_0 = \{ v \in V \text{ such that } v = 0 \text{ on } \partial \Omega \}$. Thus the
weak formulation becomes:
\begin{equation} \label{eq:poissonweakd}
    \forall v \in V_0, \ \int_\Omega \!  \nabla u \ \nabla v \ \mathrm{d}x = \int_{\Omega} \! f \ v \ \mathrm{d}x  
\end{equation}
(see finite element textbooks such as \cite{ciarlet1978}, \cite{allaire2007},
\cite{ern2004} for detailed derivations and proofs)

\paragraph*{Finite element discretization} 
We now consider the finite dimension subspace $V_h \subset V$, $V_h = \hyb_1 \text{ or } V_h = \hyb_{12}$.
Functions $u$ and $v$ can be both decomposed onto the $(\phi_i)_{i=1..n}$ function basis of $V_h$. Then
(\ref{eq:poissonweakd}) becomes a linear system of equations $A x = B$ with:

$$ 
A_{ij} = \int_\Omega \!  \nabla \phi_i \nabla \phi_j \mathrm{d}x \quad
\text{and} \quad B_i = \int_{\Omega} \! f \phi_i \mathrm{d}x 
$$
These integrals are decomposed over elements. For each element, a change of
variable is used to get back to the reference element (and the chain rule if
derivatives are involved). For instance, the contribution of element $K_c$ to
the coefficient $B_i$ is: $$ B_{i|K_c} = \int_{\hat K_c} \!
f(\*F_{K_c}(\hat{\*x})) \ \hat{\phi}_{i}(\hat{\*x}) \;
|\text{det}(J_{\*F_{K_c}}(\hat{\*x}))| \mathrm{d}\hat{\*x}  $$

The integrals are computed using numerical quadrature, i.e. evaluating
operand at well-chosen locations, so values taken by $\hat \phi_i, \nabla \hat
\phi_i$ can be pre-computed on reference elements and used for computations
on actual elements. For each element, one also needs to compute the Jacobian of
the mappings at quadrature points (which is not constant for
tri-affine and quadratic mappings).

\paragraph*{Numerical validation on analytical Poisson problem}
Consider the simple following Poisson problem on the unit cube $\Omega = [0, 1]^3$:
\begin{align} 
    -\Delta u =& 3 \pi^2 sin(\pi x)*sin(\pi y) * sin(\pi z) 
    &&\text{in} \quad \Omega \label{eq:sinbump}\\ 
    u =& \ 0    &&\text{on} \quad \partial \Omega_D  \nonumber 
\end{align}
Its analytical solution is $u = \sin(\pi x)*\sin(\pi y) * \sin(\pi z)$.

\paragraph*{Numerical experiment setup}

The meshes that we use are built using the following procedure: (a) the unit cube
is regularly divided in smaller cubes, (b) vertices inside the cube are
randomly displaced within a range up to d\% of mean edge length, (c) some cubes
are transformed into 6 tetrahedra. Distortion of the mesh (also used in
\cite{bergot2010}) applied in step (b) ensures elements are not parallel to
borders and that hexahedra faces are not planar. This is an attempt to
eliminate specific artefacts associated with unrealistic regularity of the
mesh. Transformation of hexahedra into tetrahedra (step (c)) is used to
generate hexahedral-tetrahedral meshes or fully tetrahedral meshes.

For numerical experiments, 20\% of hexahedra are transformed in tetrahedra,
resulting in a hybrid mesh where tetrahedra are 60\% of overall elements. This
proportion is largely superior to typical outputs of hex-dominant meshing
algorithms. The distortion of interior vertices is set to $d = 10\%$, this
produces dihedral angles with an average of 9 degrees and a maximum at 42 degrees
for quadrilateral faces if we consider them as two triangles (see figure
\ref{fig:geo_hjunction}a.). From our experience with hex-dominant meshes
\cite{sokolov2015}, they are typical non-planarity angles. An example of hybrid
mesh built with this procedure is shown in figure \ref{fig:cube_hexdom}.

\begin{figure}
  \centering
  \includegraphics[width=0.8\columnwidth]{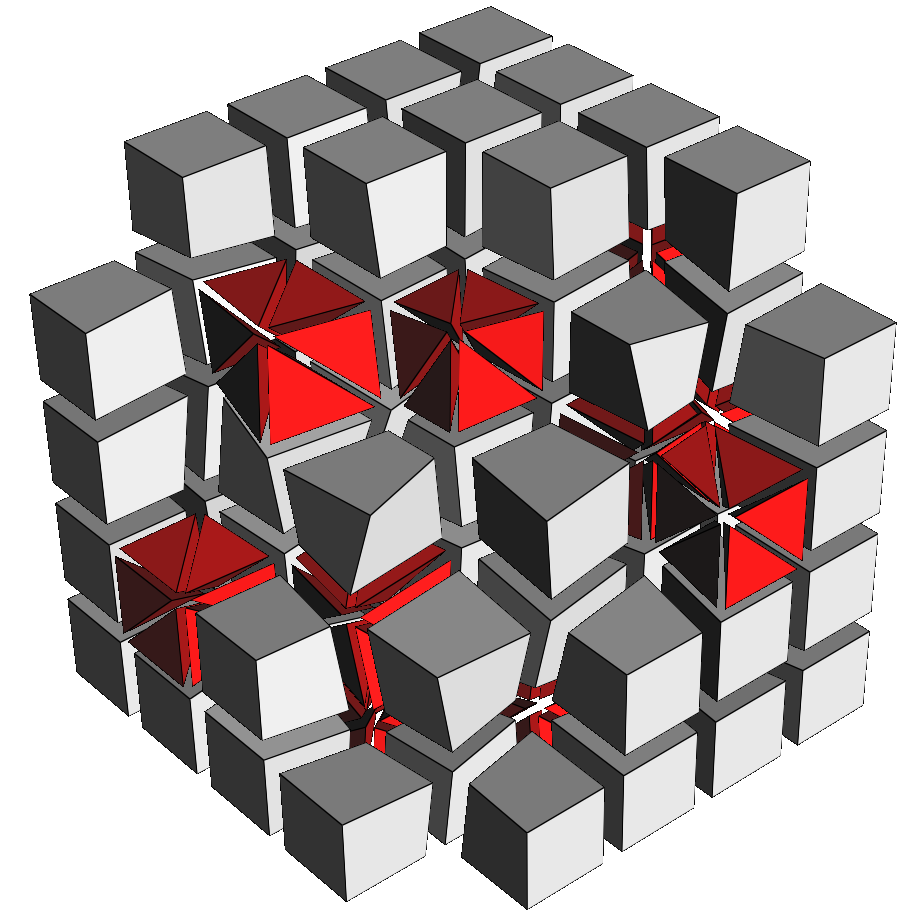} 
  \caption{Example of hybrid hexahedral-tetrahedral mesh of the unit cube with distortion $d=10\%$.
        The tetrahedra are colored in red and the hexahedra in grey.}
  \label{fig:cube_hexdom}
\end{figure}

\paragraph*{Results}
The function spaces $\hyb_1, \hyb_{12}$ have been implemented in a modified
version of the open source library MFEM \cite{mfem-library}. We solve the
analytical problem (\ref{eq:sinbump}) with finite element basis $\hyb_1,
\hyb_{12}, \PP_1, \QQ_1$ on meshes successively refined. Relative errors in
$L^2$-norm are reported in figure \ref{fig:sinbump}. In x-axis, we use $(\#
\text{degree of freedom})^{\frac{1}{3}}$ which is proportional to inverse of
the cell sizes in our cubic configuration.

\begin{figure}
  \centering
  \includegraphics[width=\columnwidth]{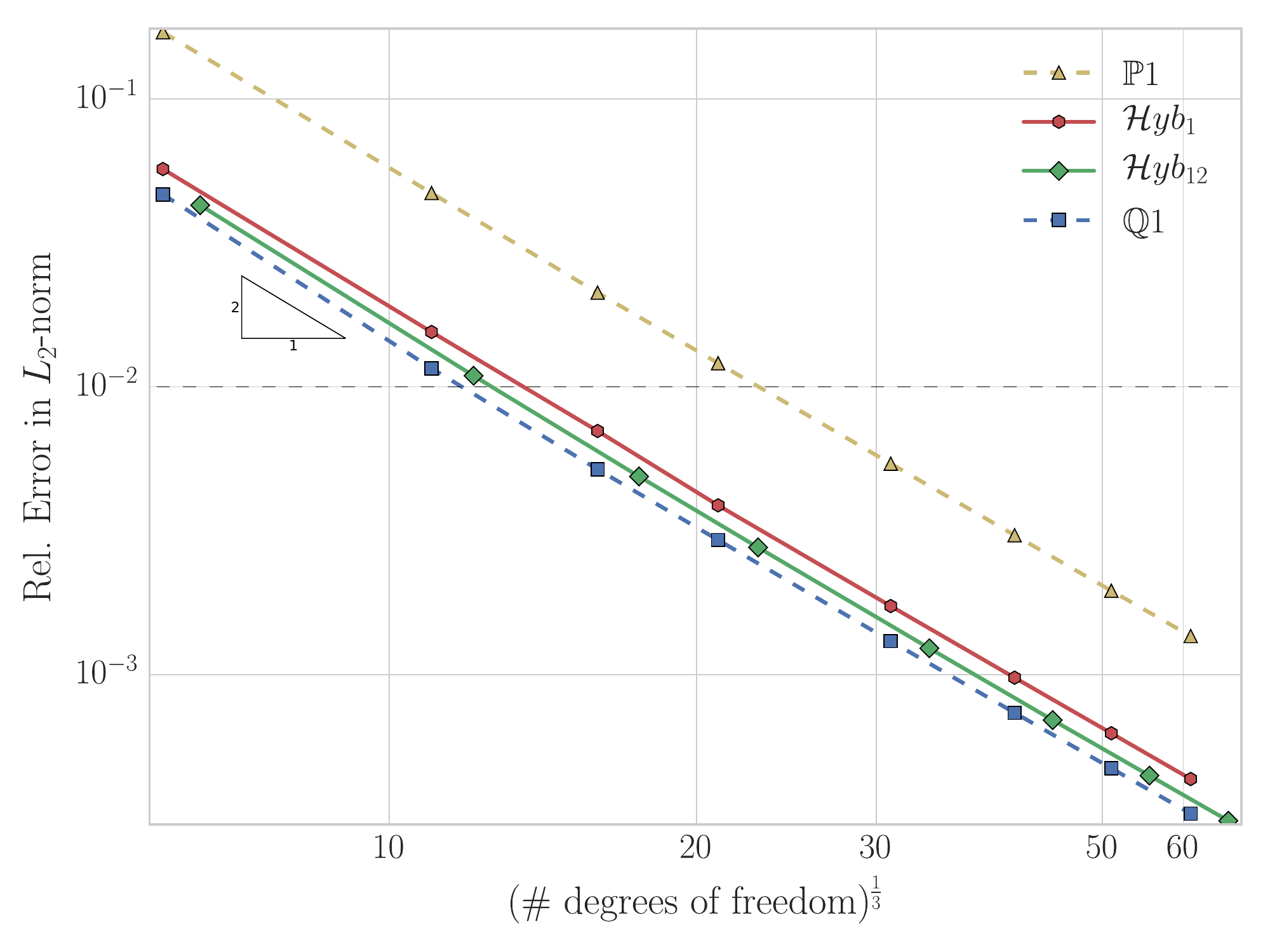} 
  \caption{Finite element simulation errors on Poisson analytical problem for
      different finite element spaces. $\hyb_1$ and  $\hyb_{12}$ solutions are
      close to $\QQ_1$ solutions and are significantly better than $\PP_1$
      solutions. Both axis use logarithmic scale.}
  \label{fig:sinbump}
\end{figure}

We observe that error convergence rates are quadratic in $L^2$-norm with mesh
refinement. For $\PP_1, \QQ_1$, this is the optimal convergence rate, see
\cite{ciarlet1978}.  For $\hyb_1, \hyb_{12}$, this could be expected as they
are made of $\QQ_1$ and $\PP_2$ with added linear constraints. The interesting
part is hybrid function spaces are much more closer to $\QQ_1$ than to $\PP_1$.
Measured accuracy with $\hyb_1$ is three times better than with $\PP_1$. Thus
solutions computed with the introduced spaces achieve good accuracy, 1\% for
instance, with much less refined meshes, and consequently smaller linear
systems.
Figure \ref{fig:sinbump_time} shows the same computations with the elapsed
times in x-axis. These timings include the assembly of the linear system and
the solve time of the iterative conjugate gradient solver (reduction of the
residual by a factor $10^{10}$). These results still indicate a gain of 
$\hyb_1, \hyb_{12}$ spaces over $\PP_1$. The timings obtained for small meshes
(time $< 0.5$ seconds) do not carry useful information as they are too much
influenced by external parameters such as processor cache, other jobs running,
etc.  It should also be reported that our implementation can be significantly
improved as it currently uses a large linear system (corresponding to the
non-constrained space) which is then reduced by applying constraints as
matrix-matrix multiplications. This can be avoided by computing directly the
right linear system to reduce execution-times, as suggested in section
\ref{sec:basis}.

\begin{figure}[h]
  \centering
  \includegraphics[width=\columnwidth]{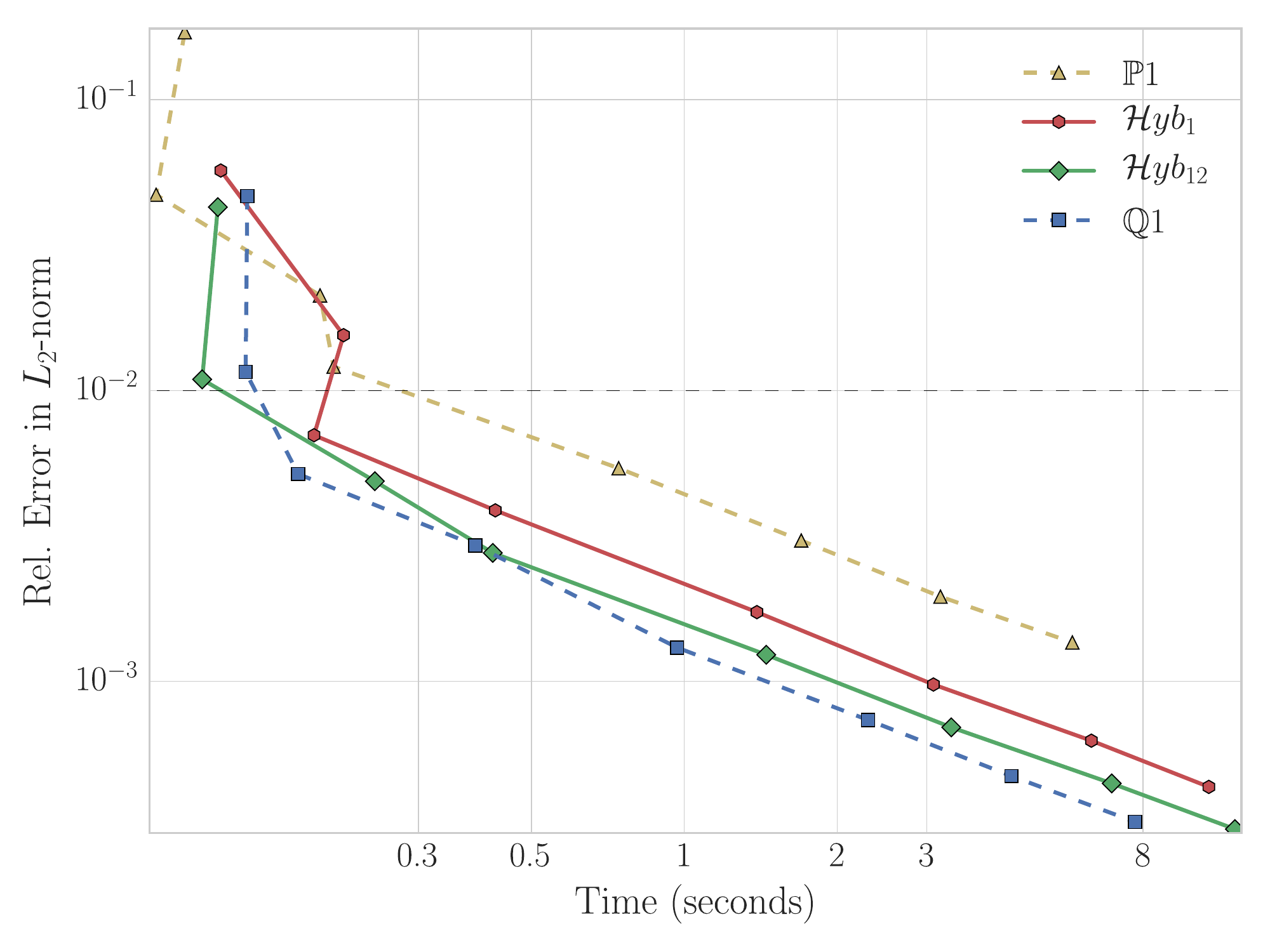} 
  \caption{Relative errors in $L^2$-norm for various finite element spaces as functions
      of computing time (assembly + solver). Both axis use logarithmic scale.}
  \label{fig:sinbump_time}
\end{figure}

\paragraph*{Importance of tetrahedra quadratic mappings}
Quadratic mappings for tetrahedra at non-conforming interfaces (introduced in
proposition \ref{prop:hjunction} to recover the continuity of the geometry) can
be seen as superfluous in the context of finite element simulations as there are
other sources of numerical errors. To highlight their impact, we solve the same
analytical problem with $\hyb_1$ using affine and quadratic mappings for
distortion values $d=10\%$ and $d=20\%$. The results are reported in figure
\ref{fig:sinbump_geoP2}.

This experiment shows that for high accuracy (error $< 3\%$), the use of affine
mappings instead of quadratic ones for tetrahedra of hybrid junctions can
induce significant errors. As one can expect, this error is tightly linked to
the degree of non-planarity of hexahedron faces. So our advice is to check the
quality of hexahedron faces in a pre-processing phase, and if the quality is
high (typically dihedral angle of quadrilateral faces $< 5$ degrees), the usage of affine
mapping approximation is reasonable unless high accuracy is desired. An
experiment with $\hyb_{12}$ exhibits exactly the same behavior (loss of
convergence when using affine mappings).

\begin{figure}[!h]
  \centering
  \includegraphics[width=\columnwidth]{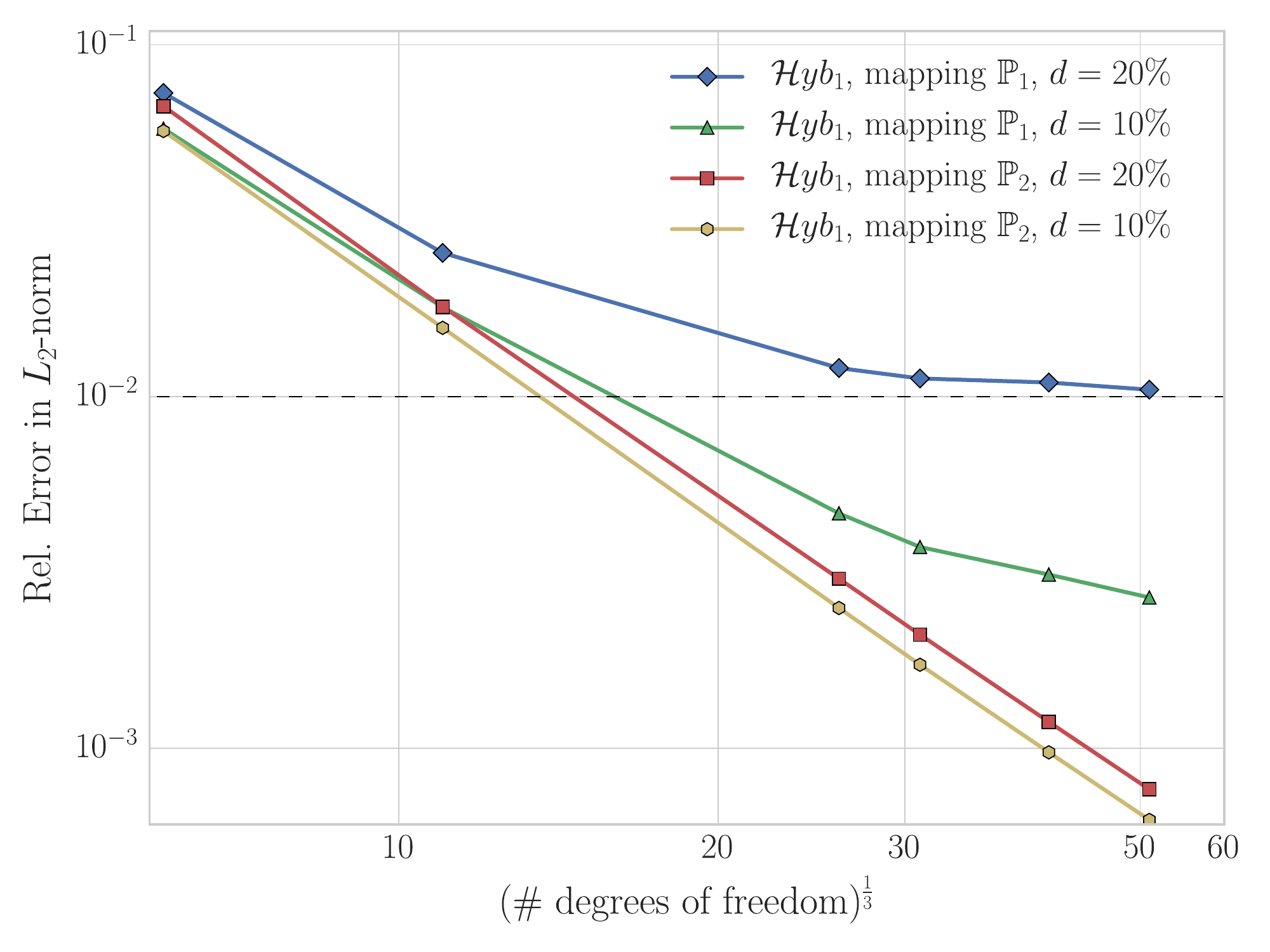} 
  \caption{Influence of tetrahedra mappings at non-conforming interfaces on the
      analytical Poisson problem, for 10\% and 20\% edge length displacement of
      vertices in the cube. Affine mappings $\PP_1$ and quadratic mappings
      $\PP_2$.}
  \label{fig:sinbump_geoP2}
\end{figure}

\subsection{Linear elasticity}
The system of equations of linear elasticity is the usual description for continuum
mechanics with small deformations. Consider a deformable medium $\Omega$ at
equilibrium, fixed on $\partial \Omega_D$, subject to a volumetric load $\*f$
inside $\Omega$ and to a surface force $\*g$ on the boundary $\partial \Omega_N$. The material
behavior is given by the Hooke's law (Lam\'e parameters $\lambda, \mu$). The resulting
displacement field $\*u \in \RR^3$ is governed by the system:
\begin{align}
    \nabla \cdot \sigma(\*u) + \*f = 0  & \qquad \text{in} \quad \Omega  \\
    \sigma (\*u) = \lambda(\nabla \cdot \*u) \mathcal{I} 
        + \mu (\nabla \cdot \*u + \nabla \cdot \*u^T )  & \qquad \text{in} \quad \Omega  \\
    \*u = 0      &  \qquad \text{on} \quad \partial \Omega_D \label{eq:bcdirichlete}\\
	\sigma (\*u) \cdot \*n = g & \qquad \text{on} \quad \partial \Omega_N \label{eq:bcneumanne}
\end{align}
where $\*n$ is the exterior normal and $\mathcal{I}$ the identity matrix.

\paragraph*{Weak formulation} For the weak formulation of the elasticity problem,
we consider the simple displacement formulation (\ref{eq:elasticityweak}). The
derivation is similar to the Poisson problem but longer, the reader can refers to
\cite{ciarlet1978}, \cite{ern2004} or other textbooks for the details.
\begin{align}
    \label{eq:elasticityweak}
    \forall \*v \in (V_0)^3, \ \int_\Omega \!  \nabla \cdot \*u \ \nabla \cdot \*v 
    \ + & \ 2 \ \mu \ \epsilon (\*u) : \epsilon(\*v) \ \mathrm{d}x \\
    &= \int_{\Omega} \! \*f \cdot \*v \ \mathrm{d}x  
    +\int_{\partial \Omega_N} \! \*g \cdot \*v \ \mathrm{d}s \nonumber
\end{align}
where $\epsilon (\*u) = \frac{1}{2}(\nabla \cdot \*u + \nabla \cdot \*u^T)$.

\paragraph*{Validation on analytical linear elasticity problem}
The following experiment solves the static linear elasticity problem
with homogeneous Dirichlet boundary conditions and a load
applied inside the domain $\Omega = [0, 1]^3$. The problem is borrowed from \cite{schillinger2015}.
\begin{align} 
    \nabla \cdot \sigma(\*u) + \*f = 0  & \qquad \text{in} \quad \Omega  \nonumber \\
    \sigma (\*u) = \lambda(\nabla \cdot \*u) \mathcal{I} 
        + \mu (\nabla \cdot \*u + \nabla \cdot \*u^T )  & \qquad \text{in} \quad \Omega  \label{eq:lesin} \\
    \*u = 0      &  \qquad \text{on} \quad \partial \Omega \nonumber 
\end{align}
where the loading $\*f$ and the Lam\'e parameters are detailed in \cite{schillinger2015}. 
The analytical expression of the
displacement is $$u_x = u_y = u_z = \sin(2 \pi x) \sin(2 \pi y) \sin(2 \pi x)$$

The procedure for mesh generation is exactly the same as with the
analytical Poisson problem experiment. Relative errors in $L^2$-norm are shown
in figure \ref{fig:lesin}. The conclusions drawn with the analytical Poisson
problem apply here too: solutions computed with the hybrid space $\hyb_1,
\hyb_{12}$ are close to the tri-linear ones ($\QQ_1$) and significantly more
accurate than solutions obtained with tetrahedra linear elements $\PP_1$. 
On this specific example, the $\hyb_1$ solution is $3.5$ times more accurate than
the $\PP_1$ one with the same number of degrees of freedom.

\begin{figure}
  \centering
  \includegraphics[width=\columnwidth]{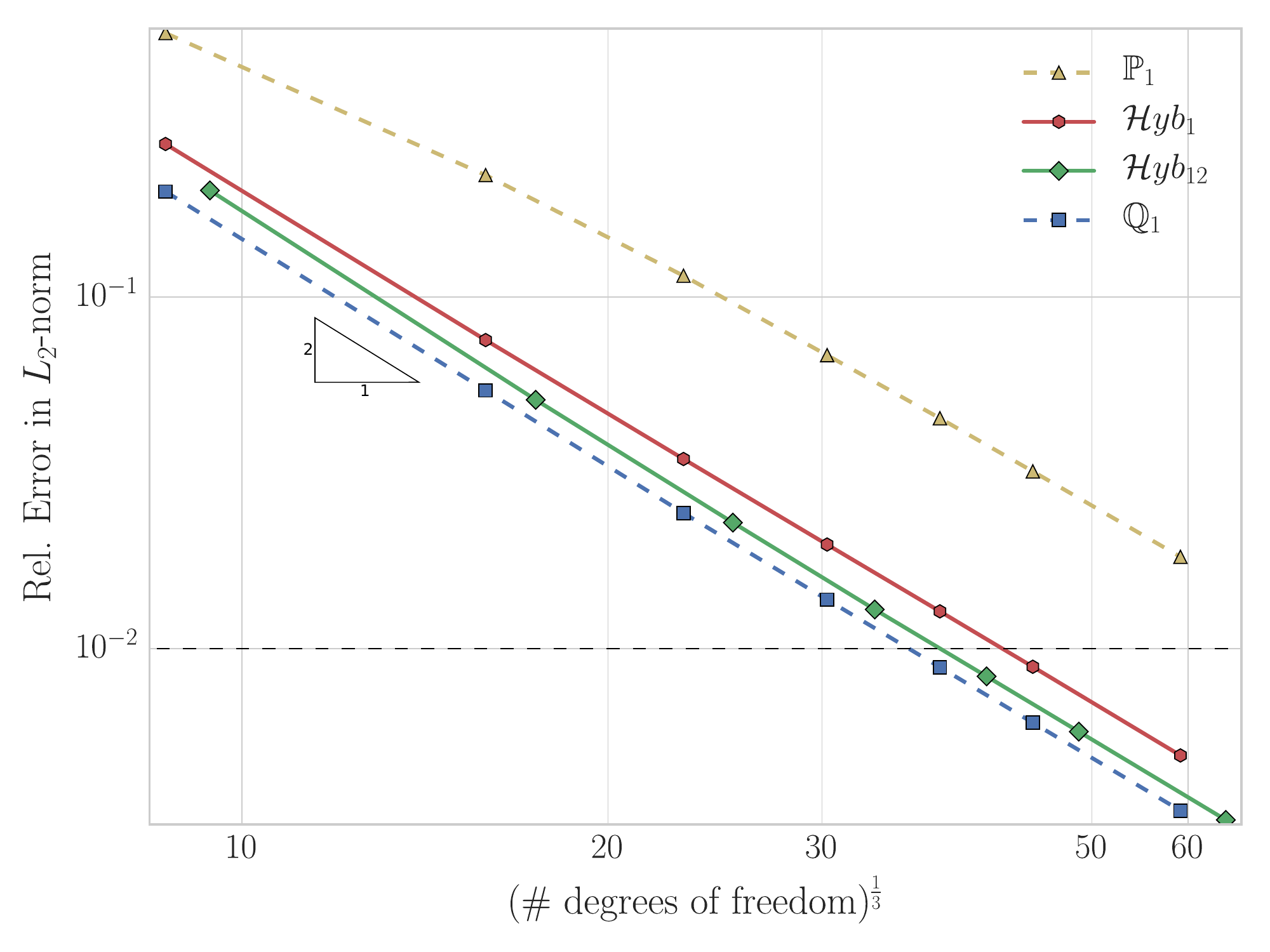} 
  \caption{Finite element simulation error on linear elasticity analytical
      problem for different finite element spaces. $\hyb_1$ and  $\hyb_{12}$
      solutions are close to $\QQ_1$ solutions and are significantly better
      than $\PP_1$ solutions.}
  \label{fig:lesin}
\end{figure}

\paragraph*{Simulations on more complex meshes}
Besides the standard test cases, we applied our approach to hex-dominant meshes
generated from industrial 3D models. The figure \ref{fig:hanger} illustrates a
linear elasticity problem solved with $\hyb_1$ on a hexahedral-tetrahedra mesh
generated with \cite{sokolov2015}. The 3D model \textit{hanger} is borrowed from
\cite{livesu2015}.  The solutions computed are consistent with the ones
computed with standard Lagrange basis but further work is required to quantify
precisely the differences. Indeed there are no analytical solution for
non-trivial geometries and computing accurately a distance between finite
element solutions defined on distinct meshes is not straightforward.

\begin{figure*}
  \centering
  \includegraphics[width=\textwidth]{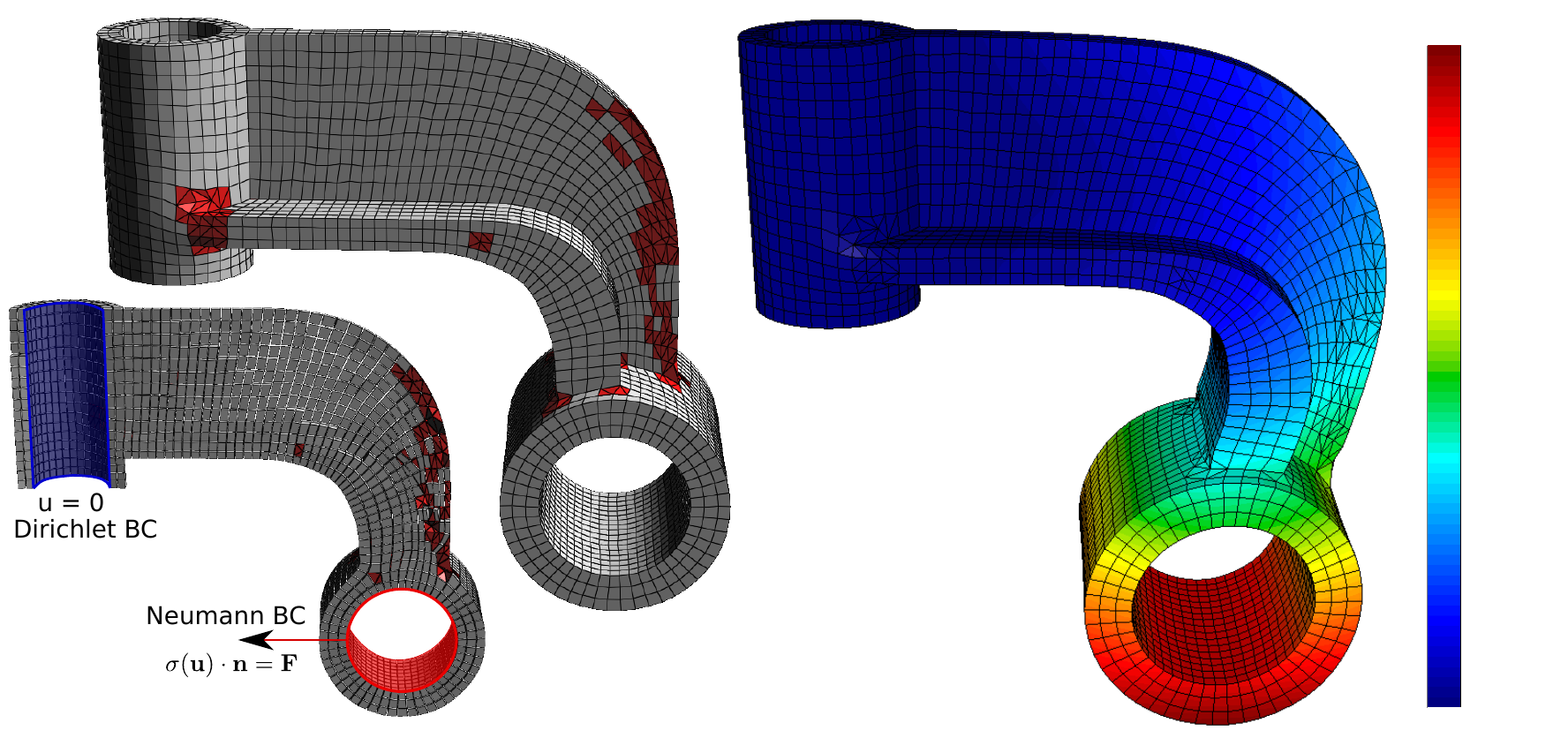} 
  \caption{$\hyb_1$-finite element solution on hybrid hexahedral-tetrahedral mesh. On the left figure,
  tetrahedra are colored in red and hexahedra in grey. On the right figure, the color is the magnitude 
  of the displacement field.}
  \label{fig:hanger}
\end{figure*}


\section{Conclusion}

Two continuous function spaces, $\hyb_1$ and $\hyb_{12}$, defined on hybrid
hexahedral-tetrahedral meshes have been introduced. The continuity of both the
geometry and the function spaces is recovered using quadratic mappings and
quadratic functions on tetrahedra connected to hexahedra, with
constraints at hybrid junctions.

The experiments conducted on analytical problems with smooth solutions show
that $\hyb_1, \hyb_{12}$ defined on hybrid meshes perform better (factor 3 in
our tests) than $\PP_1$ (tetrahedral meshes) and slightly worst than $\QQ_1$
(hexahedral meshes). We conjecture with confidence that, under standard mesh
shape and function regularity assumptions, $\hyb_1$ and $\hyb_{12}$ have a
quadratic convergence rate in $L^2$-norm with mesh refinement and a linear
convergence rate in $H_1$-norm.

Even if our current implementation works on any geometry, further research is
required to quantify the errors obtained when applying the method on
non-trivial geometries, especially the impact of hexahedral-tetrahedral meshes
properties (proportion of tetrahedra, quality of elements). This requires
techniques to compute distance between finite element solutions computed on
distinct meshes. We are currently working on this topic.

Possible future work can be the extension of the proposed function spaces to
higher orders by using standard Lagrange finite elements $\QQ_k, \PP_k$. For
hexahedra $\QQ_k$, functions restricted to faces are bi-variate polynomials of degree
$2 k$, so it should be possible to build \emph{continuous} function spaces of order
$k$ with a mix of $\QQ_k$ and $\PP_{2k}$ finite elements subjected to
appropriate constraints at hybrid interfaces.

\appendix
\label{appendix}
\section{Function basis of standard Lagrange finite elements}
\subsection{Reference tetrahedron and $\PP_1, \PP_2$ function spaces} \label{app:tet}
The reference tetrahedron, denoted by $\hat T$, is defined by its 4 vertices 
$\hat{\*s}_1 = (0, 0, 0),\ \hat{\*s}_2 = (1, 0, 0),\ \hat{\*s}_3 = (0, 1, 0),
\ \hat{\*s}_4 = (0, 0, 1)$.
The barycentric coordinates of $\hat T$ are:
\begin{align*}
    \hat \lambda_1(u, v, w) &= 1-u-v-w, \ &&\hat \lambda_2(u,v,w) = u  \\
    \hat \lambda_3(u, v, w) &= v,  \ &&\hat \lambda_4(u, v, w) = w  
\end{align*}
They satisfy $\forall i \in [1, 4], \; \lambda_i(\hat{\*s}_j) = \delta_{ij}$, so
they form a basis of the space of tri-variate polynomials of degree 1:
$$ \PP_1 = \{p(x, y, z) = a x + b y + c z + d \text{ with } a, b, c, d \in \RR\} $$
The decomposition on the basis is:
$$ \forall p \in \PP_1, \; p(u, v, w) = \sum_{i=1}^4 p_i \ \hl_i (u, v, w) \; 
\text{ where } p_i = p(\hat{\*s}_i)$$

The space of tri-variate polynomials of degree 2 is:
$$ \PP_2 = \{p(x, y, z) = \sum_{0 \leq i+j+k \leq 2} a_{ijk} x^i y^j z^k  \text{ with } a_{ijk} \in \RR\} $$
Its interpolating basis $(\hat \phi_i)_{i=1..10}$ can be expressed in terms of the barycentric coordinates:
\begin{align*}
    \hphi_i(u, v, w) &= \hl_i (2 \hl_i -1) &&1 \leq i \leq 4 \\
    \hphi_{ij}(u, v, w) &= 4 \hl_i \hl_j      &&1 \leq i < j \leq 4 
\end{align*}
The first four functions are associated with the vertices $\hat{\*s}_i$ of $\hat T$ and
the last six functions are associated with the edge midpoints $\hat{\*s}_{ij} =
\frac{\hat{\*s}_i + \hat{\*s}_j}{2}$. The decomposition is: 
$$ \forall p \in \PP_2, \; p(x, y, w) = \sum_{i=1}^{4} p_i \hphi_i (x, y, z) +
\sum_{1 \leq i < j \leq 4} p_{ij} \hphi_{ij}(x, y, z) $$
where $p_i = p(\hat{\*s_i}), p_{ij} = p(\hat{\*s}_{ij})$

\begin{nprop} \label{prop:restrictionP2}
    The restriction $p_{|t}$ of $p \in \PP_2$ to a face $t \subset \hat T$ is a 
    bi-variate polynomial of degree 2, which has 6 coefficients determined by
    the values of $p_{|t}$ at the 6 points $\hat{\*s}_i, \hat{\*s}_{ij} \in t$.
\end{nprop}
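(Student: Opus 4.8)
The plan is to combine a parametrization of the face with the explicit Lagrange basis of $\PP_2$ recalled above. First I would establish that $p_{|t}$ is a bi-variate polynomial of degree at most $2$. Each face $t$ of $\hat T$ lies in an affine plane: three of the four faces are the loci $u=0$, $v=0$, $w=0$, and the fourth is $u+v+w=1$. Substituting the corresponding affine relation into $p(u,v,w) = \sum_{0 \le i+j+k \le 2} a_{ijk}\, u^i v^j w^k$ expresses $p_{|t}$ as a polynomial of degree at most $2$ in two free affine coordinates on the plane. This gives the first assertion and places the restriction in the $6$-dimensional space of bi-variate quadratics.

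Next I would identify the six relevant basis functions. Writing $p = \sum_{i=1}^4 p_i \hphi_i + \sum_{1 \le i < j \le 4} p_{ij} \hphi_{ij}$ with $\hphi_i = \hl_i(2\hl_i - 1)$ and $\hphi_{ij} = 4\hl_i \hl_j$, I observe that the face $t$ opposite a vertex $\hat{\*s}_m$ is exactly the locus $\hl_m = 0$ (for instance $w=0$ corresponds to $\hl_4=0$). Hence on $t$ the four functions $\hphi_m$ and $\hphi_{im}$, $i \ne m$, vanish identically, while the remaining six, associated precisely with the three vertices $\hat{\*s}_i$ and three midpoints $\hat{\*s}_{ij}$ contained in $t$, survive. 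On $t$ the three surviving barycentric coordinates satisfy $\sum_{i \ne m} \hl_i = 1$, so they are the barycentric coordinates of the triangle $t$ and the six surviving functions are $\hl_i(2\hl_i - 1)$ and $4\hl_i\hl_j$: the standard $\PP_2$ Lagrange basis of $t$.

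Finally I would invoke the Kronecker-delta interpolation property to conclude. Each surviving function takes the value $1$ at its own node and $0$ at the five other nodes of $t$; this is immediate from $\hl_i(\hat{\*s}_j) = \delta_{ij}$ together with the vanishing established above. Consequently $p_{|t}(\hat{\*s}_i) = p_i$ and $p_{|t}(\hat{\*s}_{ij}) = p_{ij}$, so the six values of $p_{|t}$ at these nodes recover its coefficients, and the six functions are linearly independent. Being six independent elements of the $6$-dimensional space of bi-variate quadratics, they form a basis, which yields both that $p_{|t}$ is determined by its values at the six points and that those values are its coefficients in this basis.

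The step I expect to require the most care is the unisolvence claim, namely that the six nodes determine a unique quadratic. The argument above sidesteps any direct Vandermonde computation by exhibiting an explicit Lagrange basis with the Kronecker-delta property, so the only delicate point is checking that the restricted functions genuinely reduce to the two-dimensional barycentric expressions on $t$; this follows cleanly once $\hl_m = 0$ and $\sum_{i\ne m}\hl_i = 1$ are used.
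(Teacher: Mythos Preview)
Your argument is correct and complete. The paper itself states this proposition in the appendix without proof, treating it as a standard fact about the quadratic Lagrange element; your exposition via the vanishing of $\hl_m$ on the opposite face, the reduction of the surviving $\hphi_i,\hphi_{ij}$ to the two-dimensional barycentric Lagrange basis, and the Kronecker-delta unisolvence check is the natural way to justify it and would fit the paper's conventions.
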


\subsection{Reference hexahedron and $\QQ_1$ function space} \label{app:cube}
In this work, the reference hexahedron $\hat Q$ is the unit cube $[0, 1][0, 1][0, 1]$. 
The difference with the tetrahedron is that there are no barycentric coordinates but
there is a symmetry of the cell along the three axis that we can exploit. We denote
$(\hat{\*q}_i)_{i=1..8}$ the vertices of $\hat Q$ : 
$\hat{\*q}_1 = (0, 0, 0), \ \hat{\*q}_2 = (1, 0, 0), \ \hat{\*q}_3 = (1, 1, 0),\ \hat{\*q}_4 = (0, 1, 0),\ \text{etc.}$

By product of degree one polynomials $(x_i), (1 - x_i)$ defined along each axis, 
we can build the set $(\hps_i)_{i=1..8}$ as follow:
\begin{align*}
    \hps_1 &= (1-u)(1-v)(1-w)  && \hps_5 = (1-u)(1-v)w  \\
    \hps_2 &= u(1-v)(1-w)      && \hps_6 = u(1-v)w      \\
    \hps_3 &= u v (1-w)        && \hps_7 = u v w        \\
    \hps_4 &= (u-1) v (1-w)    && \hps_8 = (u-1) v w    
\end{align*}
They satisfy $ \hps_i(\hat{\*q}_j) = \delta_{ij}, \ 1 \leq i, j \leq 8 $ and form
a basis of the space of tri-variate polynomials of degree one in each variable.
$$ \QQ_1 = \{p(x, y, z) = \sum_{0 \leq i, j, k \leq 1} a_{ijk} x^i y^j z^k  \text{ with } a_{ijk} \in \RR\} $$
We have the decomposition 
$$ \forall p \in \QQ_1, \ p(x, y, w) = \sum_{i=1}^{8} p_i \hps_i (x, y, z) \; \text{ where } p_i = p(\hat{\*q}_i)$$
These polynomials are said to be tri-affine.

\begin{nprop} \label{prop:q1face}
    The restriction $p_{|q}$ of $p \in \QQ_1$ to a face $q \subset \hat Q$ is a bi-variate
    polynomial of degree 1 in each variable, which has 4 coefficients
    determined by values of $p_{|q}$ at the 4 vertices of the face $q$.
\end{nprop}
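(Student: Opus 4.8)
The plan is to collapse the three-dimensional tensor-product structure of $\QQ_1$ to a two-dimensional one by using that every face of $\hat Q$ is axis-aligned, i.e. obtained by holding exactly one coordinate fixed at $0$ or $1$. First I would invoke the symmetry of the cube along its three axes, noted above, to reduce to a single representative face, say $q = \{(u,v,w)\in\hat Q : w=0\}$; the remaining five faces follow by permuting and reflecting coordinates, an operation that maps $\QQ_1$ to itself and permutes the vertices.

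Next I would compute the restriction explicitly. Writing a general $p \in \QQ_1$ as $p(u,v,w)=\sum_{0\le i,j,k\le 1} a_{ijk}\,u^i v^j w^k$ and setting $w=0$ annihilates every monomial containing $w$, leaving
$$ p_{|q}(u,v) = a_{000} + a_{100}\,u + a_{010}\,v + a_{110}\,uv. $$
This is a bi-variate polynomial of degree at most one in each of $u$ and $v$, built from exactly the four monomials $1,u,v,uv$. This proves the first assertion and fixes the number of free coefficients at four.

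Then I would show these four coefficients are in bijection with the four vertex values. Evaluating $p_{|q}$ at the four corners of $q$ produces a linear system whose coefficient matrix is the tensor (Kronecker) product of the one-dimensional evaluation matrix $\left(\begin{smallmatrix}1&0\\1&1\end{smallmatrix}\right)$ with itself; since that $2\times2$ matrix is invertible, so is its Kronecker square, and the map from coefficients to vertex values is a bijection. Equivalently, I would exhibit the explicit inverse: the restrictions of $\hps_1,\dots,\hps_4$ to $q$ are the bi-affine functions $(1-u)(1-v),\ u(1-v),\ uv,\ (1-u)v$, which satisfy the Kronecker delta property at the four corners, so that $p_{|q} = \sum_{\ell=1}^{4} p(\*c_\ell)\,\left.\hps_\ell\right|_{q}$ recovers $p_{|q}$ from its vertex values $p(\*c_\ell)$.

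Finally I would note uniqueness as an immediate corollary: by the invertibility just established, a bi-affine function vanishing at all four corners is identically zero, so the vertex values determine $p_{|q}$ completely. I do not expect any real obstacle here; the single point that deserves care is the invertibility of the vertex-evaluation map, and this is transparent from the tensor-product structure of the interpolation nodes rather than requiring any separate estimate or general-position argument.
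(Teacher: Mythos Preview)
Your argument is correct and complete. The paper itself states this proposition in the appendix without proof, treating it as an elementary fact about the tensor-product space $\QQ_1$; your explicit computation via $w=0$, together with either the Kronecker-product invertibility or the Lagrange property of the restricted $\hps_\ell$, is exactly the standard justification one would supply if pressed.
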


\bibliographystyle{elsarticle-num} 
\bibliography{hybrid}

\end{document}